\newtheorem{definition}{Definition}
\newtheorem{example}{Example}
\newtheorem{theorem}{Theorem}
\renewcommand{\vec}{\mathbf}
\DeclareMathOperator*{\argmin}{\arg\!\min}
\DeclareMathOperator*{\argmax}{arg\,max}
\title{Characterization and Computation of Equilibria for Indivisible Goods}
\author{
\textbf{Simina Br\^anzei}\\
\small{Aarhus University, Denmark}\\
\small{\texttt{simina@cs.au.dk}}\\
\newline
\and
\textbf{Hadi Hosseini}\\
\small{University of Waterloo, Canada}\\
\small{\texttt{h5hossei@uwaterloo.ca}}
\and
\textbf{Peter Bro Miltersen}\\
\small{Aarhus University, Denmark}\\
\small{\texttt{bromille@cs.au.dk}}
}
\date{}
\begin{document}
\maketitle

\begin{abstract}
We consider the problem of allocating indivisible goods in a way that is fair, using one of the leading market mechanisms in economics: \emph{the competitive equilibrium
from equal incomes}, which can be interpreted as a notion of fairness. Focusing on two major classes of valuations, namely \emph{perfect substitutes} and \emph{perfect complements}, we establish the computational properties of algorithms operating in this framework. 
For the class of valuations with perfect complements, our algorithm yields a surprisingly succinct characterization of instances that admit a competitive equilibrium from equal incomes.
\end{abstract}

\thispagestyle{empty}
\setcounter{page}{0}

\newpage

\section{Introduction}
The systematic study of economic mechanisms began in the 19th century with the pioneering work of Irving Fisher~\cite{BS00} and L\'{e}on Walras \cite{Walras74},
who proposed 
the Fisher market and the exchange economy as answers to the question: ``How does one allocate scarce resources among the participants of an economic system?''. These models of a competitive economy are central in mathematical economics and have been studied ever since in an extensive body of literature~\cite{AGT_book}.

The high level scenario is that of several economic players arriving at the market with an initial endowment of resources and a utility function for consuming goods.
The problem is to compute prices and an allocation for which an optimal exchange takes place: each player is maximally satisfied with the bundle acquired, given the prices and his initial endowment.
Such allocation and prices form a \emph{market equilibrium} and, remarkably, are guaranteed to exist under mild assumptions when goods are divisible~\cite{AD54}.

In real scenarios, however, goods often come in discrete quantities; for example, clothes, furniture, houses, or cars may exist in multiple copies, but cannot be infinitely divided. Scarce resources, such as antique items or art collection pieces are even rarer -- often unique (and thus \emph{indivisible}). 
The problem of allocating discrete or indivisible resources is much more challenging because the theoretical guarantees from the divisible case do not always carry over;
however, it 
can be tackled as well using market mechanisms~\cite{DPS03,B11,BL14,OPR14}. In this paper, we are concerned with the question of allocating indivisible resources
using the \emph{competitive equilibrium from equal incomes} (CEEI).

The competitive equilibrium from equal incomes solution embodies the ideal notion of fairness ~\cite{Fol67,Var74,hylland1979efficient,OPR14} and is a special case of the Fisher market model~\cite{TH85}. Informally, there are $m$ goods to be allocated 
among $n$ buyers, each of which is endowed with one unit of an artificial currency that they can use 
to acquire goods. The buyers declare their preferences over the goods, after which the equilibrium prices and allocation are computed. 
When the goods are divisible, a competitive equilibrium from equal incomes is guaranteed to exist for very general conditions and each equilibrium allocation satisfies the desirable properties of envy-freeness and efficiency.

In recent years, the competitive equilibrium from equal incomes has also been studied as a mechanism for the fair allocation of discrete and indivisible resources in a series of papers.
Bouveret and Lema\^{\i}tre \cite{BL14} considered it for allocating indivisible goods, together with several other notions of fairness such as proportionality, envy-freeness, and maximin fairness. 
Budish \cite{B11} analyzed the allocation of multiple discrete goods for the course assignment problem\footnote{Given a set of students and courses to be offered at a university,
how should the courses be scheduled given that the students have preferences over their schedules and the courses have capacity constraints on enrollment?}
and designed an approximate variant of CEEI that is guaranteed to exist for any instance.
In this variant, buyers have permissible bundles of goods and the approximation notion
requires randomization to perturb the budgets of the buyers while relaxing the market clearing condition. In follow-up work, Othman, Papadimitriou, and Rubinstein \cite{OPR14} analyzed the computational complexity of
this variant,
showing that computing the approximate solution proposed by Budish is PPAD-complete, and that it is NP-hard to distinguish between an instance where an exact CEEI exists and the one in which there is no approximate-CEEI tighter than guaranteed in Budish \cite{B11}.

In this paper, we study the competitive equilibrium from equal incomes 
for two major classes of valuations, namely \emph{perfect substitutes} and \emph{perfect complements}.
Perfect substitutes represent goods 
that can replace each other in consumption, such as Pepsi and Coca-Cola, and are modeled mathematically 
through \emph{additive utilities}. This is the setting examined by Bouveret and Lema\^{\i}tre \cite{BL14} as well.
Perfect complements represent goods 
that have to be consumed together, such as a left shoe and a right shoe, and are modeled mathematically through \emph{Leontief utilities}. For indivisible goods, Leontief utilities are in fact equivalent to the class of single-minded buyers, which have been studied extensively in the context of auctions~\cite{AGT_book}. 

We study the computation of competitive equilibria for indivisible goods and establish polynomial time algorithms and hardness results (where applicable). Our algorithm for Leontief utilities gives a very 
succinct characterization of markets that admit a competitive equilibrium from equal incomes for indivisible resources. For additive valuations we obtain a series of hardness results.
The computational results of Othman, Papadimitriou, and Rubinstein \cite{OPR14} are orthogonal to our setting since they refer to combinatorial valuations.

\section{Competitive Equilibrium from Equal Incomes}

We begin by formally introducing the competitive equilibrium from equal incomes. Formally, there is a set $N = \{1, \ldots, n\}$ of buyers and a set $M = \{1, \ldots, m\}$ of goods which are brought by a seller. 
In general, the goods can be either infinitely divisible or discrete and, without loss of generality, there is exactly one unit from every good $j \in M$.
Each buyer $i$ is endowed with:
\begin{itemize}
\item A utility function $u_i : [0,1]^m \rightarrow \mathbb{R}_{\geq 0}$ for consuming the goods, which maps each vector $\vec{x} = \langle x_1, \ldots, x_m \rangle$ of resources to a real value, where $u_i(\vec{x})$ represents the buyer's utility for bundle $\vec{x}$; note that $\vec{x}_{j}$ is the amount received by the buyer from good $j$.
\item An initial budget $B_i = 1$, which can be viewed as (artificial) currency to acquire goods, but has no intrinsic value to the buyer. However, the currency does have intrinsic value to the seller.
\end{itemize}

Each buyer in the market wants to spend its entire budget to acquire a bundle of items that maximizes its utility, while the seller aims to sell all the goods (which it has no intrinsic value for) and extract the money from the buyers. 

A market outcome is defined as a tuple $(\vec{x},\vec{p})$, where 
$\vec{p}$ is a vector of prices for the $m$ items, and 
$\vec{x} = \langle \vec{x}_1, \ldots, \vec{x}_n \rangle$ is
an allocation of the $m$ items, with $p_j$ denoting the price of item $j$ and $x_{ij}$
representing the amount of item $j$ received by buyer $i$. A market outcome that 
maximizes the utility of each buyer subject to its budget constraint
and clears the market is called a \emph{market equilibrium}~\cite{AGT_book}.
Formally, $( \vec{x},\vec{p} )$ is a market equilibrium if and only if:
\begin{itemize}
\item For each buyer $i \in N$, the bundle $\vec{x}_i$ maximizes buyer $i$'s utility given the prices $\vec{p}$ and budget $B_i = 1$.

\item Each item $j \in M$ is completely sold or has price zero. That is: $$\left(\sum_{i=1}^{n} x_{ij}-1\right)p_j =0.$$
\item All the buyers exhaust their budgets; that is, $\sum_{j=1}^{m} p_j \cdot x_{ij}=1$, for all $ i \in N$.
\end{itemize}

Every competitive equilibrium from equal incomes $(\vec{x}, \vec{p})$ is envy-free; if buyer $i$ would strictly prefer 
another buyer $j$'s bundle $\vec{x}_j$, then $i$ could simply purchase $\vec{x}_j$ instead of $\vec{x}_i$ since they have the same buying power, which is in contradiction with the equilibrium property.

A market with divisible goods is guaranteed to have a competitive equilibrium under mild conditions~\cite{AD54}.
Moreover, for the general class of \emph{Constant Elasticity of Substitution} valuations, the equilibrium can be computed using a remarkable convex program due to Eisenberg and Gale \cite{EG59}, as one of the few algorithmic results in general equilibrium theory.
The classes of valuations studied in this paper -- perfect complements and perfect substitutes -- belong to the constant elasticity of substitution family.\\

In the following sections we study these classes in detail in the context of allocating indivisible resources. Our findings are summarized in Table~\ref{tab:results}. \\

%\begin{figure}[htp]
%\centering
%\includegraphics[scale=0.75]{market}
%\caption{Summary : 
%The market instance is denoted by a tuple $\mathcal{M} = \langle N, M, \vec{v} \rangle$, where $N$ is a set of buyers, $M$ a set of indivisible items, and $\vec{v}$ the valuations of the buyers for the items; 
%$\vec{x}$ is an allocation of the items to the buyers and $\vec{p}$  a price vector. The entries denote the complexity of deciding if there exists a competitive equilibrium at the given input.}
%\label{fig:results}
%\end{figure}
\begin{center}
%\begin{comment}
%	\begin{table}[h!]
\begin{threeparttable}
		%\scriptsize
		\centering
		\begin{tabular}{||l || c | c ||}
			%\begin{minipage}
			\hline \hline
			\textbf{Input} $\backslash$ \textbf{Valuations}& \begin{tabular}{c} \textbf{Perfect}\\ \textbf{Complements} \end{tabular} & \begin{tabular}{c}\textbf{Perfect}\\ \textbf{Substitutes} \end{tabular}\\ \hline \hline
			Market $\mathcal{M}$ & $\mathcal{P}$ \tnote{a} & $\mathcal{NP}$-hard \\ 
			\hline
			Market $\mathcal{M}$, allocation $\vec{x}$ & $\mathcal{P}$ & co-$\mathcal{NP}$-hard \\ 
			\hline
			Market $\mathcal{M}$, prices $\vec{p}$ & co-$\mathcal{NP}$-complete & co-$\mathcal{NP}$-hard \\
			\hline
			Market $\mathcal{M}$, allocation $\vec{x}$, prices $\vec{p}$ & $\mathcal{P}$ & co-$\mathcal{NP}$-complete \\ 
			\hline \hline
		\end{tabular}
		\begin{tablenotes}
		\item[a] For this entry we obtain a succinct characterization of instances that admit an exact CEEI, as well as a $1/n$ approximation of the optimal social welfare (without normalization).
		\end{tablenotes}
		%\end{minipage}
		\caption{Summary of the computational results. The market instance is denoted by a tuple $\mathcal{M} = \langle N, M, \vec{v} \rangle$, where $N$ is a set of buyers, $M$ a set of indivisible items, and $\vec{v}$ the values of the buyers for the items; $\vec{x}$ is an allocation of the items to the buyers and $\vec{p}$ a price vector.}
		\label{tab:results}
%	\end{table}
\end{threeparttable}
\end{center}
	%\end{comment}

\section{Perfect Complements}
Let 
$\mathcal{M} = (N, M, \vec{v})$ denote a market with perfect complements, represented through Leontief utilities; recall $N$ is the set of buyers, $M$ the set of items, and $\vec{v}$ a matrix
of constants, such that $v_{i,j}$ is the value of buyer $i$ for consuming one unit of good $j$.
The utility of buyer $i$ for bundle $\vec{x} = \langle x_1, \ldots, x_m \rangle 
\in [0,1]^m$ is given by:
\begin{equation} \label{eq:LeontiefUtility}
u_i(\vec{x}) = \min_{j=1}^{m} \left( \frac{x_{j}}{v_{i,j}} \right)
\end{equation}
%where $\vec{v} = (v_{i,j})_{i\in N, j \in M}$ is a matrix of constants, such that $v_{i,j}$ represents the value of buyer $i$ for consuming one unit of good $j$.

In our model the goods are indivisible, and so $x_{i,j} \in \{0, 1\}$, for all $i,j$.
By examining Equation~\ref{eq:LeontiefUtility}, it can be observed that buyer $i$'s utility for a bundle depends solely on whether the buyer gets all the items 
that it values positively (or not). To capture this we define the notion of \emph{demand set}.

\begin{definition}[Demand Set] \label{def:demandset}
Given a CEEI market with indivisible goods and Leontief utilities, let the \emph{demand set} of buyer $i$ be the set of items that $i$ has a strictly positive value for; that is,
$D_i = \{j \in M \; | \; v_{i,j} > 0\}$.
\end{definition}

Now we can introduce the precise utility equation for indivisible goods with Leontief valuations.
\begin{definition}[Leontief Utility for Indivisible Goods]
Given a market with Leontief utilities and indivisible goods, the utility of a buyer $i$ 
for a bundle $\vec{x} = \langle x_1, \ldots, x_m \rangle \in \{0,1\}^m$ is:
\[
u_i(\vec{x}) = 
\left\{
                \begin{array}{ll}
                  \min_{j \in D_i} \left( \frac{1}{v_{i,j}} \right), \; \mbox{if} \; D_i \subseteq \vec{x}, \\
                  0, \; \mbox{otherwise} \\
                \end{array}
              \right.
\]
where $D_i$ represents buyer $i$'s demand set.
\end{definition}

We illustrate this utility class with an example. Note that valuations are not necessarily normalized.
\begin{example}
Let $\mathcal{M}$ be a market with buyers $N = \{1,2,3\}$, items $M = \{1,2,3,4\}$, and values: $v_{1,1}=1$, $v_{2,2} = 2$, $v_{2,4} = 3$, $v_{3,1} = 0.5$, $v_{3,2} = 2.5$, $v_{3,3} = 5$,
and $v_{i,j} = 0$, for all other $i,j$.
Recall the demand set of each buyer consists of the items it values strictly positively, and so: $D_1 = \{1\}$, $D_2 = \{2,4\}$, $D_3 = \{1,2,3\}$.
Then the utility of buyer $1$ for a bundle $S \subseteq M$ is: $u_1(S) = 0$ if $D_1 \not \subseteq S$, and $u_1(S) = \min_{j \in D_1} \left(\frac{1}{v_{1,j}}\right) = \frac{1}{v_{1,1}} = 1$ otherwise.
Similarly, $u_2(S) = 0$ if $D_2 \not \subseteq M$ and $u_2(S) = \min\left(\frac{1}{v_{2,2}}, \frac{1}{v_{2,4}}\right) = \frac{1}{3}$ otherwise.
\end{example}

Next, we examine the computation of allocations that are fair according to the CEEI solution concept. The main computational problems that we consider are : \emph{Given a market, determine whether a competitive equilibrium exists and compute it when possible}. Depending on the scenario at hand, an allocation of the resources to the buyers may have already been made (or the seller may have already set prices for the items). The questions then are to determine whether an equilibrium exists at those prices or allocations. Our algorithm for computing a competitive equilibrium for Leontief utilities with indivisible goods yields a characterization of when a market equilibrium is guaranteed to exist.

\begin{theorem} \label{lem:polyLeon}
Given a market $\mathcal{M} = (N, M, \vec{v})$
with Leontief utilities, indivisible goods, and a tuple $(\vec{x}, \vec{p})$, where $\vec{x}$ is an allocation and $\vec{p}$ a price vector, it can be decided in polynomial time if $(\vec{x}, \vec{p})$ is a market equilibrium for $\mathcal{M}$.
\end{theorem}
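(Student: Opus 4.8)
The plan is to verify each of the three defining conditions of a market equilibrium directly and argue that each check runs in polynomial time. The market-clearing condition and the budget-exhaustion condition are simple arithmetic predicates: for every item $j$ one tests whether $\left(\sum_{i=1}^{n} x_{ij} - 1\right) p_j = 0$, and for every buyer $i$ one tests whether $\sum_{j=1}^{m} p_j x_{ij} = 1$. Both require only $O(nm)$ additions and comparisons, so the substantive content lies entirely in verifying the first condition, namely that each bundle $\vec{x}_i$ maximizes buyer $i$'s utility among all bundles affordable under budget $1$.

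The key step is to exploit the structure of Leontief utilities to describe the utility-maximizing affordable bundles explicitly. By the definition of Leontief utility for indivisible goods, a bundle yields buyer $i$ positive utility if and only if it contains the entire demand set $D_i$, in which case the utility equals the fixed constant $\min_{j \in D_i}\left(\frac{1}{v_{i,j}}\right)$, independent of any additional items. Hence the achievable utilities for buyer $i$ take only two values, $0$ and $\min_{j \in D_i}\left(\frac{1}{v_{i,j}}\right)$, and the positive value is attainable within budget precisely when $c_i := \sum_{j \in D_i} p_j \le 1$. I would therefore compute $c_i$ for each buyer and split into two cases: if $c_i \le 1$, then $\vec{x}_i$ is a maximizer if and only if $D_i \subseteq \vec{x}_i$ and $\sum_{j} p_j x_{ij} \le 1$; if $c_i > 1$, then no affordable bundle yields positive utility, so $\vec{x}_i$ is a maximizer if and only if it is merely affordable. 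Each test is a subset containment together with a comparison of sums, computable in $O(nm)$ time in total.

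The main subtlety to handle carefully is the interaction between utility maximization and the separate budget-exhaustion requirement: because goods beyond $D_i$ carry zero marginal utility yet the buyer must spend its full budget, I must treat the first and third conditions as independent predicates on the given $\vec{x}_i$ rather than conflating them. In particular, a bundle containing $D_i$ can be utility-maximizing even though it also holds zero-value items purchased only to exhaust the budget, and a buyer who cannot afford $D_i$ is nonetheless required by the third condition to spend exactly $1$ on valueless goods. This is precisely where Leontief valuations diverge from the additive case: whereas checking utility maximization for perfect substitutes amounts to a knapsack-type optimization and the analogous verification is co-$\mathcal{NP}$-complete (Table~\ref{tab:results}), the all-or-nothing nature of Leontief demand collapses the optimization to the single threshold comparison $c_i \le 1$. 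Once this demand characterization is in place, verifying all three conditions reduces to a fixed number of sums and subset checks per buyer and per item, establishing the claimed polynomial-time bound.
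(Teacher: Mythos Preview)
Your proposal is correct and follows essentially the same approach as the paper: verify the three equilibrium conditions directly, and observe that for Leontief utilities the optimality check collapses to the single comparison $\sum_{j\in D_i} p_j \le 1$ (equivalently, the paper's formulation that $u_i(\vec{x}_i)=0$ forces $\sum_{j\in D_i} p_j > 1$). Your write-up is more explicit than the paper's three-line proof in separating affordability from budget exhaustion and in spelling out the two-case analysis on $c_i$, but the underlying argument is identical.
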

\begin{proof}
It is sufficient to verify that these conditions hold:
\begin{itemize}
	\item Each buyer $i$ exhausts their budget: $\sum_{j \in \vec{x}_i} p_j = 1$.

	\item Each item is either allocated or has a price of zero. %%%$\sum_{i=1}^{n} x_{i,j} = 1$, for each item $j \in M$.

	\item No buyer $i$ can afford a better bundle; that is, if $u_i(\vec{x}, \vec{p}) = 0$, then $\sum_{j \in D_i} p_j > 1$.
\end{itemize}
Clearly all three conditions can be verified in polynomial time, namely $O(mn)$, which concludes the proof.
%\simina{State the exact runtime}
%\hadi{I don't think we can get any runtime better than $O(nm)$. For the first two, we'll have to traverse the matrix, and the last one is essentially better than $nm$.}
\end{proof}

\begin{theorem} \label{thm:Leontief_Ix}
Given a market $\mathcal{M} = (N, M, \vec{v})$ with Leontief utilities, indivisible goods, and a price vector $\vec{p}$, it is co-NP-complete to decide if there exists an allocation $\vec{x}$ such that $(\vec{x}, \vec{p})$ 
is a market equilibrium for $\mathcal{M}$.
\end{theorem}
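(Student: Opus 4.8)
The plan is to prove the two halves of co-NP-completeness separately: membership in co-NP, and co-NP-hardness by a reduction from a canonical co-NP-complete problem, engineered so that an equilibrium allocation exists precisely on the \emph{yes}-instances of the source.

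For membership, I would show that the complement — the set of pairs $(\mathcal{M}, \vec{p})$ admitting \emph{no} completing allocation — is in NP, by distilling from the equilibrium conditions a short obstruction certificate. The conditions checked in Theorem~\ref{lem:polyLeon} pin down a rigid skeleton that any equilibrium must obey: every positively priced good is sold, so necessarily $\sum_{j : p_j > 0} p_j = n$; every buyer $i$ whose demand set is affordable, i.e.\ $\sum_{j \in D_i} p_j \le 1$, must receive all of $D_i$ (otherwise $u_i(\vec{x}) = 0$ while the buyer could afford the strictly better bundle $D_i$, violating the third condition), which in turn forces the affordable demand sets to be pairwise disjoint; and every buyer must be topped up to spend exactly $1$. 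I would argue these are jointly characterizing, so that a \emph{no}-certificate is an exhibited obstruction — a budget mismatch, an over-demanded good shared by two affordable buyers, or a pair of intersecting affordable demand sets — together with the bookkeeping ruling out any completion.

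For hardness I would reduce from the complement of a packing/assignment problem (a suitable restriction of \textsc{Subset-Sum} or \textsc{Set-Packing}), choosing the encoding so that an equilibrium allocation exists if and only if the source is a \emph{yes}-instance, i.e.\ if and only if the underlying NP-complete core has \emph{no} solution. Given a source instance I would build prices and Leontief valuations in which a forced block of affordable buyers pins down most of the allocation through their disjoint demand sets, leaving a residual pool of priced goods. The gadget is to be engineered so that any solution to the NP-complete core creates an \emph{unavoidable} conflict — two affordable buyers forced onto a shared good, or a residual pool that can no longer be split into exact unit-cost bundles — that destroys every candidate equilibrium, while the absence of any such solution always leaves room to complete a valid equilibrium allocation.

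The crux, and the step I expect to be the main obstacle, is making equilibrium existence a genuinely \emph{universal} (co-NP) property: a single bad substructure anywhere must rule out \emph{all} equilibria, regardless of how the allocator assigns the remaining goods. This is exactly what separates the present reduction from the naive existential reading, so the gadget must route the NP-complete core into the rigid, allocator-independent constraints — the pairwise disjointness forced on affordable demand sets, together with exact budget exhaustion — rather than into any choice the allocator could simply sidestep. Verifying that the price scaling holds affordable buyers exactly at the intended affordability boundary, and that the exact unit-cost top-up is simultaneously feasible for all buyers, will be the delicate but routine remaining bookkeeping.
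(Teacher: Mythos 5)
There is a genuine gap in your plan, and it is structural rather than a matter of ``delicate but routine bookkeeping.'' The decision problem here is existential: a \emph{yes}-instance is certified by the allocation $\vec{x}$ itself, which by Theorem~\ref{lem:polyLeon} can be verified in polynomial time, so the problem lies in NP unconditionally. The paper's proof then reduces \textsc{Partition} in the \emph{positive} direction: three buyers, an item $0$ with $p_0 = 1$ demanded only by buyer $1$, and two buyers with identical demand set $\{1, \ldots, m\}$ whose total price is $2 > 1$. Those two buyers can never afford their demand sets, so optimality is vacuous for them and the only binding constraint is exact budget exhaustion; an equilibrium allocation exists if and only if the items $\{1,\ldots,m\}$ split into two parts of price exactly $1$ each, i.e., if and only if a partition exists. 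In other words, the paper establishes NP-membership plus NP-hardness (the ``co-NP-complete'' label in the theorem statement is at odds with the paper's own proof), and your attempt to take that label literally is precisely what derails your argument.

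Both halves of your plan fail for this reason. For membership in co-NP, your claimed ``jointly characterizing'' conditions are necessary but not sufficient: after the rigid skeleton is in place (affordable demand sets pairwise disjoint and fully served, positively priced goods sold), deciding whether the residual goods can be distributed so that every buyer spends exactly $1$ is itself a \textsc{Partition}/\textsc{Subset-Sum} feasibility question --- exactly the part the paper's gadget encodes the hard problem into --- and no short obstruction certificate for its infeasibility is available; ``the bookkeeping ruling out any completion'' \emph{is} the entire difficulty, not an appendix. For hardness, a reduction in your intended direction (equilibrium exists iff the NP-complete core has \emph{no} solution) would make co-NP-hard a problem that sits in NP, implying $\text{NP} = \text{co-NP}$; so the gadget you describe cannot exist under standard assumptions. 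Your closing remark correctly identifies the crux --- making equilibrium existence a genuinely universal property --- but the resolution is that it is \emph{not} universal here: the allocator's freedom in topping up budgets is an NP-type search, and the correct proof exploits it by running the reduction the other way, mapping partitions \emph{to} equilibria rather than to obstructions.
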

\begin{proof}
From Theorem~\ref{lem:polyLeon}, given an allocation $\vec{x}$ for $(\mathcal{M}, \vec{p})$, it can be verified in polynomial time if there exists a market equilibrium for $\mathcal{M}$ at $(\vec{x}, \vec{p})$.
To show hardness, we use the NP-complete problem $\textsc{PARTITION}$:

\begin{quote}
\emph{Given a set of positive integers $S = \{s_1, \ldots, s_m\}$, are there subsets $A, B \subset S$ such that $A \cup B = S$, $A \cap B = \emptyset$, and $\sum_{a \in A} a = \sum_{b \in B} b$? }
\end{quote}
%We show that an algorithm that decides the existence of a market equilibrium at a given price vector can also solve $\textsc{PARTITION}$. 
Given partition input $S$, construct a market $\mathcal{M} = (N, M, \vec{v})$ and price vector $\vec{p}$ as follows:
\begin{itemize}
\item Set $N = \{1, 2, 3\}$ of buyers.
\item Set $M = \{0, 1, \ldots, m\}$ of items.
\item Price vector $\vec{p} = (p_0, \ldots, p_m)$, such that $p_0 = 1$ and $p_j = \frac{2 \cdot s_j}{\sum_{l=1}^{m} s_l}$, for all $j \in \{1, \ldots, m\}$.
\item Demand sets: $D_1 = \{0\}$ and $D_2 = D_3 = \{1, \ldots, m\}$. Clearly these demand sets can be expressed through Leontief valuations -- for example, let $v_{1,0} = 1$ and $v_{1,j} = 0$, for all $j \in \{1, \ldots, m\}$, and 
$v_{2,k} = v_{3,k} = \frac{1}{m+1}$, for all $k \in \{0, \ldots, m\}$.
\end{itemize}
Note that the total price of the items in $\{1, \ldots, m\}$ is: $$p(\{1, \ldots, m\}) = \sum_{j=1}^{m} p_j = \sum_{j=1}^{m} \frac{2 \cdot s_j}{\sum_{l=1}^{m} s_l} = 2$$.

$(\implies)$ If there is a partition $(A, B)$ of $\mathcal{S}$, then we show that the allocation $\vec{x}$ given by $\vec{x}_1 = \{0\}$, $\vec{x}_2 = A$, $\vec{x}_3 = B$ is a market equilibrium:
\begin{itemize}
\item All the items are sold, since $\vec{x}_1 \cup \vec{x}_2 \cup \vec{x}_3 = \{0, 1, \ldots, m\}$, and the bundles are disjoint, since $\vec{x}_i \cap \vec{x}_j = \emptyset$, for all $i,j \in N$, $i \neq j$.
\item Each buyer gets an optimal bundle at prices $\vec{p}$, since buyer $1$ gets the best possible bundle: $u_1(\vec{x}_1, \vec{p}) = u_1(D_1, \vec{p}) = 1$, 
while buyers $2$ and $3$ cannot afford anything better, as the price of their demanded bundle is higher than their budget: $\vec{p}(D_2) = \vec{p}(D_3) = \sum_{j = 0}^{m} p_j = 3 > 1$.
\item The amount of money spent by each buyer is equal to their budget: $\vec{p}(\vec{x}_1) = 1$, $\vec{p}(\vec{x}_2) = \vec{p}(A) = 1$, and $\vec{p}(\vec{x}_3) = \vec{p}(B) = 1$.
\end{itemize}

$(\impliedby)$ On the other hand, if there is an allocation $\vec{x}$ such that $(\vec{x}, \vec{p})$ is a market equilibrium for the market $\mathcal{M}$, we claim that $A = \vec{x}_2$ and $B = \vec{x}_3$ is a correct partition of $\mathcal{S}$.
First, note that if $\vec{x}$ is such that buyer $1$ does not get item $0$, then the optimality condition fails for buyer $1$ because $p_0 = 1$ and so the buyer could always afford it.
Thus buyer $1$ must get item
$0$ and moreover, it cannot get anything else. Thus $\vec{x}_1 = \{0\}$ and $\vec{x}_2, \vec{x}_3 \subseteq \{1, \ldots, m\}$.
\begin{itemize}
\item Since all the items are sold at $(\vec{x},\vec{p})$, we have that $A \cup B = S$.
\item Buyers $2$ and $3$ spend their entire budgets at $(\vec{x}, \vec{p})$, and so $\vec{p}(\vec{x}_2) = 1$ and $\vec{p}(\vec{x}_3) = 1$. Then $\vec{p}(A) = \vec{p}(B) = 1$ and:
\begin{eqnarray*}
\sum_{s_j \in A} \left( \frac{2 \cdot s_j}{\sum_{k=1}^{m} s_k}\right) & = & \sum_{s_j \in B} \left( \frac{2 \cdot s_j}{\sum_{k=1}^{m} s_k} \right) \iff \\
\sum_{s_j \in A} s_j & = & \sum_{s_j \in B} s_j
\end{eqnarray*}
\end{itemize}
Thus $S$ has a partition if and only if the corresponding market and price vector admit a market clearing allocation, which completes the proof.
\end{proof}

%The proof of Theorem~\ref{thm:Leontief_Ix} uses the NP-complete problem $\textsc{PARTITION}$ and is relegated to the full version of the paper.
%The theorem can be proved using a reduction from the NP-complete problem $\textsc{PARTITION}$; the proof is delegated to the full version of the paper.

\begin{theorem}
Given a market $\mathcal{M} = (N, M, \vec{v})$ with Leontief utilities and an allocation $\vec{x}$, it can be decided in polynomial time if there exists price vector $\vec{p}$ such that 
$(\vec{x}, \vec{p})$ is a market equilibrium for $\mathcal{M}$.
\end{theorem}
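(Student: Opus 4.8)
The plan is to exploit the fact that, because the allocation $\vec{x}$ is fixed in advance, all of the equilibrium conditions from Theorem~\ref{lem:polyLeon} become \emph{linear} constraints on the unknown price vector $\vec{p}$. First I would determine, for each buyer $i$, whether $D_i \subseteq \vec{x}_i$; this is a purely combinatorial test on $\vec{x}$ and $\vec{v}$ that involves no prices, and it tells us whether $u_i(\vec{x}) > 0$ (call $i$ \emph{satisfied}) or $u_i(\vec{x}) = 0$ (call $i$ \emph{unsatisfied}). For a satisfied buyer the bundle $\vec{x}_i$ already achieves $i$'s globally maximal utility $\min_{j \in D_i}(1/v_{i,j})$, so the only remaining requirement on it is budget exhaustion; for an unsatisfied buyer we additionally need that its demand set is unaffordable.

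Concretely, I would set up the following system in the variables $\{p_j\}_{j \in M}$: nonnegativity $p_j \geq 0$ for all $j$; the market-clearing condition $p_j = 0$ for every item $j$ left unallocated by $\vec{x}$; the budget-exhaustion equalities $\sum_{j \in \vec{x}_i} p_j = 1$ for every buyer $i$; and, for every unsatisfied buyer $i$, the strict inequality $\sum_{j \in D_i} p_j > 1$. By Theorem~\ref{lem:polyLeon}, a price vector makes $(\vec{x}, \vec{p})$ a market equilibrium if and only if it satisfies exactly this system, so the question reduces to deciding feasibility of a system of linear (in)equalities, which is solvable in polynomial time by linear programming.

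The one technical point is the presence of the \emph{strict} inequalities $\sum_{j \in D_i} p_j > 1$, since a linear program handles only non-strict constraints. I would resolve this with the standard slack reformulation: introduce a scalar $\delta$ and maximize $\delta$ subject to the equalities, the nonnegativity constraints, $\sum_{j \in D_i} p_j \geq 1 + \delta$ for every unsatisfied buyer $i$, and a harmless bound $0 \leq \delta \leq 1$. Because each budget equality together with nonnegativity forces every allocated price into $[0,1]$, the sums $\sum_{j \in D_i} p_j$ are bounded and the LP is well posed. When there is at least one unsatisfied buyer, a valid price vector exists if and only if this LP has optimal value $\delta^\ast > 0$; when every buyer is satisfied there are no strict constraints and it suffices to test feasibility of the equality-and-nonnegativity system directly (this also correctly reports infeasibility if some buyer receives the empty bundle, forcing $0 = 1$).

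The main obstacle, and really the only one, is recognizing that fixing $\vec{x}$ eliminates all combinatorial choice and linearizes the problem, in sharp contrast with Theorem~\ref{thm:Leontief_Ix}, where the prices are fixed but the freedom to decide which items go to which buyer is exactly what encodes $\textsc{PARTITION}$. Once this observation is in hand, correctness is immediate from Theorem~\ref{lem:polyLeon}, and the only remaining care is the routine conversion of the strict inequalities into a linear program through the $\delta$-slack maximization.
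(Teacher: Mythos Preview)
Your proposal is correct and follows essentially the same approach as the paper: set up a linear program in the prices with nonnegativity, zero price for unallocated items, budget-exhaustion equalities, and the unaffordability constraint $\sum_{j\in D_i}p_j>1$ for each unsatisfied buyer, then handle the strict inequalities by maximizing a slack variable (the paper calls it $\epsilon$, you call it $\delta$). The only step you leave implicit that the paper makes explicit is a preliminary check that the given allocation $\vec{x}$ is feasible (disjoint bundles), but this is trivial preprocessing.
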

\begin{proof}
This problem can be solved using linear programming (see Algorithm~\ref{alg1}).
At a high level, one needs to check that the allocation $\vec{x}$ is feasible, that each item is either sold or has a price of zero, and that (i) each buyer spends all their money and (ii)
whenever a buyer does not get their demand set, the bundle is too expensive.
Since the number of constraints is polynomial in the number of buyers and items, the algorithm runs in polynomial time.
\end{proof}

%\vspace{-4mm}
\begin{algorithm}[h!]
\begin{algorithmic}
\caption{\textsc{Compute-Equilibrium-Prices($\mathcal{M}, \vec{x}$)}}
\label{alg1}
\State \textbf{input}: Market $\mathcal{M}$ with Leontief valuations; allocation $\vec{x}$
\State \textbf{output}: price vector $\vec{p}$ such that $(\vec{x}, \vec{p})$ is a market equilibrium for $\mathcal{M}$, or \emph{\textsc{Null}} if none exists
\State $\mathcal{A} \leftarrow \emptyset$ \emph{// Set of items allocated under $\vec{x}$} 
\State \emph{// Check that $\vec{x}$ is feasible}
\For{$i = 1 \; \mbox{\emph{\textbf{to}}} \; n$} 
  \State $\mathcal{A} \leftarrow \mathcal{A} \cup \vec{x}_i$ 
  \For{$j = i+1 \; \mbox{\emph{\textbf{to}}} \; n$} 
    \If{($\vec{x}_i \cap \vec{x}_j \neq \emptyset$)}
      \State \textbf{return} \emph{\textsc{Null}} 
    \EndIf
  \EndFor
\EndFor
\State $\mathcal{C} \leftarrow \emptyset$ \textit{// Initialize the set of constraints}
\For{$j \in \mathcal{A} \setminus M$}
  \State $\mathcal{C} \leftarrow \mathcal{C} \cup \left\{p_j \leq 0 \right\}$ \emph{// Price the unsold items at zero}
\EndFor
\For{$i \in \{1, \ldots, n\}$} 
  \State $\mathcal{C} \leftarrow \mathcal{C} \cup \left\{\sum_{j \in \vec{x}_i} p_j \leq 1, - \sum_{j \in \vec{x}_i} p_j \leq -1\right\}$
  \If{($D_i \not \subseteq \vec{x}_i$)}
    \State $\mathcal{C} \leftarrow \mathcal{C} \cup \left\{ - \sum_{j \in \vec{x}_i} p_j \leq -1 - \epsilon\right\}$ \emph{// If buyer $i$ does not get its demand set, then it's because the bundle is too expensive}
  \EndIf
\EndFor
\State \textbf{return} $\mbox{\textsc{Solve}}(\max \epsilon, \mathcal{C}, \vec{p} \geq 0)$ \emph{// Linear program solver}
\end{algorithmic}
\end{algorithm}

Finally, we investigate the problem of computing both market equilibrium allocation and prices given an instance. We will later also discuss improving the efficiency of the computed equilibria.

%\vspace{-2mm}
\begin{algorithm}[h!]
\begin{algorithmic}
\caption{\textsc{Compute-Equilibrium($\mathcal{M}$)}}
\label{alg2}
\State \textbf{input}: Market $\mathcal{M}$ with Leontief valuations
\State \textbf{output}: Equilibrium allocation and prices $(\vec{x},\vec{p})$, or \emph{\textsc{Null}} if none exist
\If{($m < n)$} 
  \State \textbf{return} \emph{\textsc{Null}} \emph{// No equilibrium : too few items}
\EndIf
\For{$\left(\;i,j \in N\right)$}
  \If{$\left(i \neq j\right.$ \textbf{\emph{and}} $D_i = D_j$ \textbf{\emph{and}} $\left.|D_i| = 1\right)$}
    \State \textbf{return} \emph{\textsc{Null}} \emph{// No equilibrium : buyers $i$ and $j$ have identical singleton demand sets}
  \EndIf
\EndFor
\State $\mathcal{A} \leftarrow \emptyset$ \emph{ // Items allocated so far}
%For each buyer i in increasing order by the number of required items that haven't been allocated so far:
\For{$\left(\mbox{buyer} \; i \in N\right.$ \emph{in increasing order by} $\left.|D_i|\right)$}
    \If{$\left(|D_i \setminus \mathcal{A}| \geq 1\right)$}
      \State $k_i \leftarrow \argmin_{\ell \in D_i \setminus \mathcal{A}} \ell$ \emph{// If not all the items in buyer $i$'s demand set have been allocated, give the buyer one of them}
    \Else
      \State $k_i \leftarrow \argmin_{\ell \in M \setminus \mathcal{A}} \ell$ \emph{// Otherwise, $i$ gets an arbitrary unallocated item}
    \EndIf
    \State $\vec{x}_i \leftarrow \{k_i\}$
    \State $\mathcal{A} \leftarrow \mathcal{A} \cup \{k_i\}$ 
\EndFor
\State $L \leftarrow \argmax_{i \in N} |D_i|$ \emph{ // The buyer with the largest demand also gets all the unallocated items (if any)}
\State $\vec{x}_L \leftarrow \vec{x}_L \cup \left( M \setminus \mathcal{A} \right)$
\For{$\left(i \in N\right)$}
  \For{$\left(j \in \vec{x}_i\right)$}
    \State $p_j \leftarrow 1 / |\vec{x}_i|$
  \EndFor
\EndFor
\State \textbf{return} $(\vec{x}, \vec{p})$
\end{algorithmic}
\end{algorithm}

\begin{theorem} \label{thm:Leontief_I}
	Given a market $\mathcal{M} = (N, M, \vec{v})$ with Leontief utilities and indivisible items, a competitive equilibrium from equal incomes exists if and only if the following hold:
	\begin{itemize}
		\item There are at least as many items as buyers ($m \geq n$)
		\item No two buyers have identical demand sets of size one.
	\end{itemize}
	Moreover, an equilibrium can be computed in polynomial time if it exists.
\end{theorem}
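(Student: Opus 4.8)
The plan is to prove the two stated conditions are necessary by short contradiction arguments from the equilibrium constraints, then prove sufficiency by verifying that Algorithm~\ref{alg2} outputs a genuine equilibrium whenever the conditions hold; the polynomial running time is immediate from inspection of its loops. Throughout I assume every demand set is nonempty (the degenerate case $D_i=\emptyset$ should be excluded or handled separately, since the Leontief utility is ill-defined there).

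For necessity, I would first note that in any equilibrium each buyer spends its entire budget $1$ on its bundle at nonnegative prices, so every buyer's bundle contains at least one strictly positively priced item; since the goods are indivisible the bundles are pairwise disjoint, yielding at least $n$ distinct positively priced items and hence $m\ge n$. For the second condition, suppose $D_i=D_j=\{k\}$ with $i\neq j$. Item $k$ can go to at most one buyer, so at least one of them, say $j$, does not receive it and therefore has $u_j=0$; optimality for $j$ forces $p_k>1$, since otherwise $j$ could afford $\{k\}$, its only utility-generating bundle. But $p_k>1>0$ means $k$ must be fully sold by market clearing, and whichever buyer $\ell$ receives $k$ spends at least $p_k>1$, contradicting $\sum_{j\in\vec x_\ell}p_j=1$. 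Hence no two buyers can share a singleton demand set.

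For sufficiency I would analyze the output $(\vec x,\vec p)$ of Algorithm~\ref{alg2}. By construction the bundles are disjoint and every item is allocated (leftovers going to $L$), so market clearing holds, and since each $j\in\vec x_i$ is priced $1/|\vec x_i|$ with $|\vec x_i|\ge 1$, every buyer spends exactly $1$. The crux is optimality. The key lemma is that \emph{every buyer with a singleton demand set receives its demanded item}: increasing order of $|D_i|$ processes all singleton buyers first, their demand items are pairwise distinct by the second hypothesis, and an easy induction shows each such buyer finds its item still unallocated and takes it, achieving maximal utility. Consequently any buyer with $u_i=0$ has $|D_i|\ge 2$, and each non-$L$ buyer holds exactly one item priced $1$, while items in $\vec x_L$ are priced $1/|\vec x_L|$. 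For a zero-utility buyer $i\neq L$ I would argue some item of $D_i$ lies in a non-$L$ singleton bundle: otherwise $D_i\subseteq\vec x_L$, but then $i$ took an item outside its demand set, which only happens once its whole demand set was already allocated in phase~1; thus no element of $D_i$ is a leftover, so at most one element of $D_i$ can be $L$'s single phase-$1$ item, contradicting $|D_i|\ge 2$. Hence at least one item of $D_i$ has price $1$ and, as $|D_i|\ge 2$ with all prices positive, $\sum_{j\in D_i}p_j>1$, so $u_i=0$ is optimal. Finally, if $D_L\subseteq\vec x_L$ then $L$ gets its full demand set and is trivially optimal, whereas if $D_L\not\subseteq\vec x_L$ then $|D_L|\ge 2$ and an item of $D_L$ sits in a price-$1$ bundle, again giving $\sum_{j\in D_L}p_j>1$.

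I expect the main obstacle to be this optimality check for zero-utility buyers, and in particular ruling out $D_i\subseteq\vec x_L$ for $i\neq L$. The clean way around it is the observation that $L$ receives only a single item during phase~1, so a demand set lying entirely inside $\vec x_L$ would have to consist of leftovers, contradicting the fact that such a buyer reached the ``arbitrary item'' branch precisely because its entire demand set had already been allocated in phase~1. Everything else — budget exhaustion, market clearing, and optimality for the positive-utility buyers — follows directly from the pricing rule and the singleton-demand lemma.
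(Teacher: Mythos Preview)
Your proposal is correct and follows essentially the same route as the paper: necessity by the budget-exhaustion and envy arguments, sufficiency by verifying that Algorithm~\ref{alg2}'s output satisfies market clearing, budget exhaustion, and per-buyer optimality. The only cosmetic difference is in the case split for optimality---the paper distinguishes ``last buyer'' versus ``not last,'' whereas you distinguish $i=L$ versus $i\neq L$ and rule out $D_i\subseteq \vec{x}_L$ via the leftover observation---but the underlying ideas are identical; your handling of the $D_i\subseteq \vec{x}_L$ case is in fact slightly more careful than the paper's, which tacitly assumes $L$ is processed last.
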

\begin{proof}
	Clearly the two conditions are necessary; if there are fewer items than buyers, then the budgets can never be exhausted, while if there exist two buyers whose demand sets are identical and consist
	of exactly the same item, at least one of them will be envious under any pair of feasible allocation and prices.
	
	To see that the conditions are also sufficient, consider the allocation produced by Algorithm~\ref{alg2}.
	At a high level, the algorithm first sorts the buyers in increasing order by the sizes of their demand sets, breaking ties lexicographically. Then each buyer $i$ in this order is given one item, $j$, selected from the unallocated items in the buyer's demand set (if possible), and an arbitrary un-allocated item otherwise. Finally, the last buyer (i.e. with
	the largest demand set) additionally gets all the items that remained unallocated at the end of this iteration (if any).
	
	The prices are set as follows. For each buyer $i$ among the first $n-1$ allocated, the items in its bundle, $\vec{x}_i$, are priced equally, at $1/|\vec{x}_i|$. For the last allocated buyer, $L$, the items in $\vec{x}_L \cap D_L$ are priced high (at $(1-\epsilon) / |\vec{x}_L \cap D_L|$), while the unwanted items, in $\vec{x}_L \setminus D_L$, are priced low (at $\epsilon / |\vec{x}_L \setminus D_L|$).
	
	Now we verify that the allocation and prices $(\vec{x}, \vec{p})$ computed by Algorithm~\ref{alg2} represent indeed a market equilibrium for $\mathcal{M}$ (if one exists):\\
	
	$\bullet$\hspace{2mm}$\textsc{Budgets exhausted}$: Each buyer gets a non-empty bundle priced at $1$.\\
	%$$\vec{p}(\vec{x}_i) = \left(\frac{1}{|\vec{x}_i|}\right) \cdot |\vec{x}_i| = 1.$$ 
	%Thus the buyer spends all its money.
	
	$\bullet$\hspace{2mm}$\textsc{Items sold}$: Each item is allocated by the algorithm.\\
	
	$\bullet$\hspace{2mm}$\textsc{Optimality for each buyer}$: We show that each buyer $i$ either gets its demand set or cannot afford it using a few cases:\\
	
	\hspace{3mm}$\circ$\hspace{2mm}\emph{Case 1}: 
	$(|D_i| = 1)$. Since there are no two identical demand sets with the size of one, buyer $i$ gets the unique item in its demand set, and this allocation maximizes $i$'s utility. \\
	
	\hspace{3mm}$\circ$\hspace{2mm} \emph{Case 2}: $(|D_i| \geq 2)$ and $i$ is not the last buyer.
	Then if $i$ gets an item from its demand set, 
	since $|D_i| \geq 2$ and all items are positively priced, the bundle $D_i$ is too expensive: $\vec{p}(D_i) > 1$. Otherwise, $i$ gets an item outside of its demand set.
	Then all the items in $D_i$ were allocated to the previous buyers. Since $|D_i| \geq 2$ and each previously allocated item has price $1$, $D_i$ is too expensive: $\vec{p}(D_i) > 1$. \\
	
	\hspace{3mm}$\circ$\hspace{2mm}\emph{Case 3}: $(|D_i| \geq 2)$ and $i$ is the last buyer. If $i$ does not get all its demand, then some item in $D_i$ was given to an earlier buyer at price $1$. From $|D_i| \geq 2$, 
	there is at least one other desired item in $D_i$ positively priced,
	thus $\vec{p}(D_i) > 1$. \\
	
	Thus, Algorithm~\ref{alg2} computes an equilibrium, which completes the proof.
\end{proof}

To gain more intuition, we illustrate Algorithm~\ref{alg2} on an example.

\begin{example}
	Consider a market with buyers
	$N = \{1, \ldots, 6\}$, items $M = \{1, \ldots, 8\}$, and demands:
	$D_1 = \{1\}$, $D_2 = \{2\}$, $D_3 = \{2,3\}$, $D_4 = \{2,3\}$, $D_5 = \{4, 5, 6\}$, $D_6 = \{6, 7, 8\}$.
	Algorithm~\ref{alg2} sorts the buyers in increasing order of the sizes of their demand sets, breaking ties lexicographically. The order is: $(1, 2, 3, 4, 5, 6)$.
	\begin{itemize}
		\item{} \textit{Step 1} : 
		Buyer $1$ gets item $1$ at price $1$: $\vec{x}_1 = \{1\}$, $p_1 = 1$. 
		\item{} \textit{Step 2} : Buyer $2$ gets item $2$ at price $1$ : $\vec{x}_2 = \{2\}$, $p_2 = 1$.
		\item{} \textit{Step 3} : There is one unallocated item left from buyer $3$'s demand set, and so $3$ gets it: $\vec{x}_3 = \{3\}$ and $p_3 = 1$.
		\item{} \textit{Step 4} : Buyer $4$'s demand set has been completely allocated, thus $4$ gets the free item (outside of its demand) with smallest index: $\vec{x}_4 = \{4\}$ and $p_4 = 1$. 
		\item{} \textit{Step 5} : There are two items ($5$ and $6$) left unallocated in buyer $5$'s demand set. Thus : $\vec{x}_5 = \{5\}$ and $p_5 = 1$.
		\item{} \textit{Step 6} : Buyer $6$ gets the leftover: $\vec{x}_6 = \{6, 7, 8\}$ at $p_6 = p_7 = p_8 = 1/3$. 
	\end{itemize}
\end{example}

The characterization obtained through Algorithm~\ref{alg2} raises several important questions. For example, not only do 
fair division procedures typically guarantee fairness (according to a given solution concept), but also they improve some measure of efficiency when possible. \\

The utilitarian \emph{social welfare} of an allocation $\vec{x}$ is defined as the sum of the buyers' utilities:
$
\textsc{SW}(\vec{x}) = \sum_{i =1}^{n} u_i(\vec{x}_i)
$.
Note that social welfare normalization is not required for any of our next results. \\

%For measuring social welfare, valuations must be normalized such that players are weighted equally (since their rights over the goods are equal), and so: %$u_i(M) = 1$, for each buyer $i$. This can also be interpreted as the number of buyers that receive their demand sets (possibly in addition to other items).
%Then the goal is to find allocations that are not only fair, but also maximize (or approximate) the optimal social welfare among all the fair allocations. 
As the following example illustrates, the allocation computed by Algorithm~\ref{alg2} can be the worst possible
among all the market equilibria.

\begin{example}
	Given $n \in \mathbb{N}$, let $N = \{1, \ldots, n\}$ be the set of buyers, $M = \{1, \ldots, 2n\}$ the set of items, and the demand sets given by: $D_i = \{2i-1, 2i\}$, for each $i \in N$. Algorithm~\ref{alg2} computes the allocation:
	$\vec{x}_1 = \{1\}$, $\vec{x}_2 = \{3\}$, \ldots, $\vec{x}_{n-1} = \{2n-3\}$, $\vec{x}_n = \{2, 4, \ldots, 2n-2, 2n-1, 2n\}$, with a social welfare of $\textsc{SW}(\vec{x}) = 1$.
	The optimal allocation supported in a competitive equilibrium is:
	$\vec{x}^*_i = \{2i-1, 2i\}$, for each $i \in N$, with a social welfare of $\textsc{SW}(\vec{x}^*) = n$.
\end{example}

These observations give rise to the question:
\emph{Is there an efficient algorithm for computing a competitive equilibrium from equal incomes with optimal social welfare (among all equilibria) for perfect complements with indivisible goods?} \\

Note that the allocation that maximizes social welfare among all possible allocations cannot always be supported in a
competitive equilibrium. We illustrate this phenomenon in Example~\ref{eg:maxWelfare}.

\begin{example} \label{eg:maxWelfare}
	Consider a market with buyers: $N = \{1,2\}$ and items: $M = \{1,2,3\}$, where the demand sets are: $D_1 = D_2 = \{1,2\}$. 
	Concretely, let these demands be induced by the valuations: $v_{1,1} = v_{1,2} = 1$, $v_{1,3} = 0$ and $v_{2,1} = v_{2,2} = 1$, $v_{2,3} = 0$.
	The optimal social welfare is $1$ and can be achieved by giving one of the buyers its entire demand set and the other buyer the remaining item; for example, let $\vec{x}_1^* = \{1,2\}$ 
	and $\vec{x}_2^* = \{3\}$, with $p_1 = p_2 = 1/2$ and $p_3 = 1$. Clearly no such allocation can be supported in an equilibrium, because whenever a buyer gets their full demand, the other buyer does not get its own demand but can afford it (their initial budgets are equal). Thus every competitive equilibrium has a social welfare of zero, such as $\vec{x}_1 = \{1\}$, $\vec{x}_2 = \{2,3\}$, with $p_1 = 1$, $p_2 = 1$, $p_3 = 0$.
\end{example}

The next result implies that equilibria with optimal social welfare cannot be computed efficiently in the worst case.

\begin{theorem} \label{thm:optWelfare}
	Given a market $\mathcal{M} = (N, M, \vec{v})$ with Leontief valuations, indivisible goods,  and an integer $K \in \mathbb{N}$, it is NP-complete to decide if $\mathcal{M}$ has a competitive equilibrium 
	from equal incomes with social welfare at least $K$.
\end{theorem}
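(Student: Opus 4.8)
The plan is to establish both that the problem lies in NP and that it is NP-hard. For membership in NP, the key observation is that the social welfare $\textsc{SW}(\vec{x}) = \sum_{i} u_i(\vec{x}_i)$ is a function of the allocation alone, so I would take the certificate to be an allocation $\vec{x}$. Given $\vec{x}$, one computes $\textsc{SW}(\vec{x})$ directly in time $O(mn)$ (for each buyer test whether $D_i \subseteq \vec{x}_i$ and, if so, add $\min_{j \in D_i} 1/v_{i,j}$), and one decides in polynomial time whether there exist prices $\vec{p}$ making $(\vec{x},\vec{p})$ an equilibrium by running the linear program of Algorithm~\ref{alg1}. The verifier accepts iff a supporting price vector exists and $\textsc{SW}(\vec{x}) \ge K$. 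If some equilibrium with welfare at least $K$ exists, its allocation is a valid certificate, so this places the problem in NP; crucially, no prices need to be guessed.

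For hardness I would reduce from \textsc{Independent Set}. Given a graph $G = (V,E)$ and target $k$, build a market with one buyer $b_v$ per vertex $v$ and two kinds of items: a private item $q_v$ for each vertex and an edge item $e_{uv}$ for each edge $uv \in E$. Set $D_v = \{q_v\} \cup \{e_{uv} : uv \in E\}$ with all positive values equal to $1$, so that a satisfied buyer (one receiving its whole demand set) contributes exactly $1$ to the social welfare, and put $K = k$. The key structural fact is that $D_u \cap D_v = \{e_{uv}\}$ if $uv \in E$ and $\emptyset$ otherwise, so two buyers can be simultaneously satisfied only if they are non-adjacent. Hence in any equilibrium the set of satisfied buyers is an independent set of $G$, and $\textsc{SW}$ equals its cardinality; this already yields the easy direction, $\textsc{SW} \le \alpha(G)$.

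The crux, and the step I expect to be the main obstacle, is the converse: showing that any independent set $T$ is realized by an actual equilibrium in which exactly the buyers of $T$ are satisfied, so that welfare $\alpha(G)$ is attainable. Here one must exhibit explicit prices that simultaneously meet three requirements — every buyer exhausts its unit budget, every item is sold, and every unsatisfied buyer is priced out of its demand set ($\vec{p}(D_\ell) > 1$). I would allocate $D_v$ to each winner $v \in T$, give each loser $\ell \notin T$ only its private item $q_\ell$, and dump every remaining (both-endpoints-loser) edge item onto a single fixed winner $w^\ast$, which is harmless since extra items never lower a satisfied buyer's utility. Pricing each edge item at a small $\delta > 0$, each loser's private item at $1$, and each winner's private item at $1 - \delta \cdot (\text{number of non-private items in its bundle})$ makes every bundle cost exactly $1$ once $\delta$ is small enough, while a loser's demand set costs $1 + \delta \cdot \deg(\ell) > 1$ (losers are non-isolated, since any isolated vertex can be placed in $T$). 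Verifying these equalities and the strict inequality, together with the membership argument, shows the problem is NP-complete and in fact that the maximum welfare over all equilibria equals $\alpha(G)$.
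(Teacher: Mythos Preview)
Your proof is correct and follows essentially the same construction as the paper's: both reduce from a packing-type problem (you use \textsc{Independent Set}, the paper uses \textsc{Set Packing}) by assigning each buyer a private item together with the ``conflict'' items, so that $D_i = C_i \cup \{\text{private}_i\}$ and two buyers can be simultaneously satisfied only if their underlying sets are disjoint. You supply details the paper leaves as a sketch---in particular the explicit supporting prices for the converse direction and the NP-membership argument via the LP of Algorithm~\ref{alg1}---but the reduction idea is the same.
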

\begin{proof}(sketch)
	We use a reduction from the NP-complete problem $\textsc{SET}$ $\textsc{PACKING}$: 
	\begin{quote}
		\emph{Given a collection $\mathcal{C} = \langle C_1, \ldots, C_n\rangle$ of finite sets and a positive integer $K \leq n$, does $\mathcal{C}$ 
			contain at least $K$ mutually disjoint sets?
		}
	\end{quote}
	Given collection $\mathcal{C}$ and integer $K$, let $\mathcal{M}$ be a market with buyers $N = \{1, \ldots, n\}$, items $M = \{1, \ldots, m+n\}$, and demands $D_i = C_i \cup \{m+i\}$, for all $i \in N$.
	It can be checked that $\mathcal{M}$ has a competitive equilibrium with social welfare at least $K$ if and only if $\mathcal{C}$ has a disjoint collection of at least $K$ sets.
\end{proof}
We obtain a $1/n$-approximation of the optimal welfare in polynomial time (Algorithm~\ref{alg2.1}) and leave open the question of determining the tight bound; see appendix for algorithm and its proof.

\begin{theorem} \label{thm:additive_maxSW_apx}
	There is a polynomial-time algorithm that computes a competitive equilibrium from equal incomes with a social welfare of at least $1/n$ of the optimum welfare attainable in any equilibrium.
\end{theorem}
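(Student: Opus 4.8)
The plan is to reduce the approximation to a single structural fact about the welfare of any equilibrium. Write $w_i := u_i(D_i) = \min_{j \in D_i}\left(1/v_{i,j}\right)$ for the utility buyer $i$ obtains from its full demand set. In any equilibrium the social welfare equals $\sum_{i \in T} w_i$, where $T \subseteq N$ is the set of buyers who actually receive their entire demand set; since the goods are indivisible and a satisfied buyer takes \emph{all} of $D_i$, the sets $\{D_i\}_{i \in T}$ are pairwise disjoint. Hence the optimum equilibrium welfare is $\mathrm{OPT} = \sum_{i \in T^\ast} w_i$ for some feasible $T^\ast$ with $|T^\ast| \le n$, so at least one buyer in $T^\ast$ contributes at least $\mathrm{OPT}/n$. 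The algorithm will therefore only need to guarantee that one provably ``satisfiable'' buyer actually gets its demand set, and charge all of the welfare to that buyer.

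The crux is a clean characterization of which buyers can be satisfied. I would prove that buyer $i$ receives its full demand set in \emph{some} equilibrium if and only if (a) no other buyer $k \ne i$ has $D_k \subseteq D_i$, and (b) $m - |D_i| \ge n-1$. For necessity of (a): if $i$ is satisfied, its bundle contains $D_i$ and is priced at exactly $1$, so any $D_k \subseteq D_i$ costs at most $1$; such a buyer $k$ gets zero utility yet can afford $D_k$, violating the affordability condition of Theorem~\ref{lem:polyLeon}. Necessity of (b) follows because the other $n-1$ buyers must each receive a nonempty bundle to exhaust their unit budgets, which requires $m - |D_i|$ leftover items. For sufficiency I would exhibit the equilibrium explicitly: price $D_i$ uniformly at $1/|D_i|$, give $D_i$ to buyer $i$, and run the greedy of Algorithm~\ref{alg2} on the remaining $n-1$ buyers and $m-|D_i|$ items; condition (b) guarantees each remaining buyer gets an item so budgets are met and all items are sold, and the optimality check repeats the case analysis in the proof of Theorem~\ref{thm:Leontief_I}.

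Let $\mathcal{P}$ denote the set of buyers satisfying (a)--(b); it is computable in $O(n^2 m)$ time. The algorithm first tests the global existence conditions of Theorem~\ref{thm:Leontief_I}; if $\mathcal{P} = \emptyset$ it returns the output of Algorithm~\ref{alg2}, which is correct because $\mathcal{P} = \emptyset$ forces $\mathrm{OPT} = 0$ (any positive-welfare equilibrium would witness a satisfiable buyer). Otherwise it selects $i^\ast = \argmax_{i \in \mathcal{P}} w_i$, gives $i^\ast$ its demand set, and extends this to a full equilibrium via the sufficiency construction above. The resulting outcome is an equilibrium by that construction, and its welfare is at least $w_{i^\ast}$. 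Since every buyer in the optimal satisfied set $T^\ast$ is satisfiable and hence lies in $\mathcal{P}$, we obtain $w_{i^\ast} \ge \max_{i \in T^\ast} w_i \ge \frac{1}{|T^\ast|}\sum_{i \in T^\ast} w_i = \mathrm{OPT}/|T^\ast| \ge \mathrm{OPT}/n$, which is the claimed ratio.

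The main obstacle I expect is the sufficiency direction of the characterization. After pre-allocating $D_{i^\ast}$, some remaining buyers have demand sets that partially overlap $D_{i^\ast}$ and are thus doomed to zero utility, so I must verify that their demand sets remain \emph{strictly} unaffordable under the constructed prices, and I must rule out that deleting the items of $D_{i^\ast}$ creates two buyers with identical singleton ``available'' demands, which would block an equilibrium on the residual instance. Both issues are resolved by combining condition (a) with the global ``no two identical singleton demand sets'' hypothesis of Theorem~\ref{thm:Leontief_I}, but carrying out the price-accounting correctly in every case is the delicate part.
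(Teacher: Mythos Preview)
Your proposal is essentially the paper's proof: the same eligibility set $\mathcal{P}$ defined by conditions (a)--(b), the same argmax selection, the same extension of the partial allocation via Algorithm~\ref{alg2} on the residual instance, and the same $1/n$ bound via $T^\ast \subseteq \mathcal{P}$. If anything, your averaging argument for $w_{i^\ast}\ge \mathrm{OPT}/n$ is spelled out more explicitly than the paper's one-line ``clearly the best equilibrium cannot have welfare higher than $n\cdot v_k$,'' and your identification of the delicate step---verifying optimality for buyers whose demands overlap $D_{i^\ast}$ in the residual construction---matches exactly where the paper spends its case analysis.
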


\section{Perfect Substitutes}

We begin by introducing the utility function in a market with perfect substitutes, represented through additive valuations.
\begin{definition}[Additive Utility for Indivisible Goods]
Given a market $\mathcal{M} = (N, M, \vec{v})$ with additive utilities and indivisible goods, the utility of a buyer $i$ for a bundle $\vec{x} = \langle x_1, \ldots, x_m \rangle 
\in \{0,1\}^m$ is:
\begin{equation} \label{eq:AdditiveUtility}
u_i(\vec{x}) = \sum_{j=1}^{m} v_{i,j} \cdot x_{i,j}
\end{equation}
where $v_{i,j}$ are constants and represent the value of buyer $i$ for consuming one unit of good $j$, while $x_{i,j} = 1$ if buyer $i$ gets good $j$, and $x_{i,j}=0$, otherwise.
\end{definition}

Next we investigate the computation of competitive equilibria from equal incomes with indivisible goods and additive utilities. Note that if a market $\mathcal{M}$ has a competitive equilibrium at some allocation and prices $(\vec{x}, \vec{p})$, then $\mathcal{M}$ is guaranteed to have an equilibrium at the same allocation $\vec{x}$ where all the prices are rational numbers, $(\vec{x}, \vec{p}^{*})$; this aspect appears implicitly in some of the following proofs.

\begin{theorem}
Given a market $\mathcal{M} = (N, M, \vec{v})$ with additive valuations, indivisible goods, and tuple $(\vec{x}, \vec{p})$, where $\vec{x}$
is an allocation and $\vec{p}$ is a price vector, it is coNP-complete to determine whether $(\vec{x}, \vec{p})$ is a competitive equilibrium for $\mathcal{M}$.
\end{theorem}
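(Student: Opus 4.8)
The plan is to prove both coNP membership and coNP-hardness. For membership, I would argue that the \emph{complement} problem—deciding that $(\vec{x},\vec{p})$ is \emph{not} an equilibrium—lies in NP. The budget-exhaustion condition $\sum_j p_j x_{ij}=1$ and the market-clearing condition $(\sum_i x_{ij}-1)p_j=0$ are directly checkable in polynomial time, so a violation of either is detected without any guessing. The only genuinely nondeterministic ingredient is optimality: a succinct certificate that $(\vec{x},\vec{p})$ fails to be an equilibrium is a pair $(i,S)$ consisting of a buyer $i$ and a bundle $S\subseteq M$ with $\vec{p}(S)\le 1$ and $u_i(S)>u_i(\vec{x}_i)$. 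Because the utilities are additive, both inequalities are verifiable in $O(m)$ time, so the complement is in NP and the problem itself is in coNP.

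For hardness I would reduce the NP-complete problem \textsc{Knapsack} to the complement of our problem: given positive integers $w_1,\dots,w_k$, $c_1,\dots,c_k$, a capacity $C$, and a goal $G\ge 1$ (with $w_j\le C$ after discarding oversized items), decide whether some $S\subseteq\{1,\dots,k\}$ has $\sum_{j\in S}w_j\le C$ and $\sum_{j\in S}c_j\ge G$. The idea is to build a market in which a single \emph{adversary} buyer $A$'s optimality condition encodes exactly this question. I create one \emph{real} item $j$ per knapsack item, priced $p_j=w_j/C\le 1$ and valued $v_{A,j}=c_j$ by the adversary, together with a \emph{filler} item $f$ priced at $1$, valued $v_{A,f}=G-1$, with the adversary's entire bundle set to $\vec{x}_A=\{f\}$. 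Then the adversary's budget is exactly exhausted, every affordable real subset $S$ is precisely one with $\sum_{j\in S}w_j\le C$ (value $\sum_{j\in S}c_j$), and buying $f$ together with anything is unaffordable. Since the values are integral, the adversary can strictly improve over its current value $G-1$ if and only if some feasible $S$ attains value $\ge G$—that is, iff the \textsc{Knapsack} instance is a yes-instance. Thus $A$ fails to be optimal exactly when \textsc{Knapsack} has a solution.

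The step requiring the most care—the main obstacle—is guaranteeing that \emph{all the remaining} equilibrium conditions (every item sold, every buyer's budget exhausted, every \emph{other} buyer optimal) hold simultaneously, even though the prices $p_j=w_j/C$ are rigidly dictated by the instance and need not group into unit-price bundles. I would resolve this with a filler gadget: for each real item $j$ introduce a dedicated dummy buyer $Z_j$ and (when $p_j<1$) a dedicated filler item $f_j$ priced at $1-p_j$; let $Z_j$ value only $j$ and $f_j$ and hold $\vec{x}_{Z_j}=\{j,f_j\}$. Each such bundle costs exactly $1$, so $Z_j$ exhausts its budget, and since $Z_j$ already owns both items it values, its bundle trivially maximizes utility among affordable bundles. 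Setting $v_{A,f_j}=0$ and all remaining cross valuations to $0$ ensures the fillers never help the adversary (they add price but no value) and that no buyer can improve by acquiring another's item.

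Every item—each real item, each filler $f_j$, and the adversary's $f$—is then allocated, so the market clears, and budgets are exhausted throughout. It follows that $(\vec{x},\vec{p})$ is a competitive equilibrium if and only if the adversary is optimal, i.e.\ if and only if the \textsc{Knapsack} instance has \emph{no} solution. Since \textsc{Knapsack} is NP-complete, this reduction from its complement shows the problem is coNP-hard, and combined with the membership argument it is coNP-complete. Finally, all constructed prices are rationals of polynomial bit-length, so the reduction runs in polynomial time.
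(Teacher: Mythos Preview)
Your argument is correct and follows essentially the same approach as the paper: for membership you use exactly the same certificate (a buyer together with an affordable, strictly better bundle), and for hardness you build the same gadget structure---one ``test'' buyer whose current singleton bundle is priced at~$1$ and valued at one less than the target, plus one dummy buyer per real item paired with a filler that tops its bundle price up to~$1$. The only difference is cosmetic: the paper reduces from \textsc{Subset-Sum} (taking $c_j=w_j$ and $G=C=K$), whereas you reduce from the more general \textsc{Knapsack}; the constructions and verifications are otherwise line-for-line parallel.
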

\begin{proof}
The problem admits efficiently verifiable ``no'' instances: it can be checked in polynomial time if the allocation is not feasible, or the budgets are not exhausted, or not all the items are 
sold. Otherwise, 
if $(\vec{x}, \vec{p})$ is not a market equilibrium for $\mathcal{M}$, then 
there exists a buyer $k$ with a suboptimal bundle. In other words, the certificate that $(\vec{x}, \vec{p})$ is not a market equilibrium is given by 
a tuple $(k, D)$, where $k$ is a buyer that strictly prefers bundle $D \subseteq M$ to $\vec{x}_k$ and can also afford it; that is,  $u_k(\vec{x}_k) < u_k(D)$
and $p(D) \leq 1$. \\

We show hardness using the $\textsc{SUBSET-SUM}$ problem:
\begin{quote}
\emph{Given a set of positive integers $\mathcal{W} = \{w_1, \ldots, w_n\}$ and a target number $K$, is there a subset $S \subseteq \mathcal{W}$ that adds up to exactly $K$?}
\end{quote}

Given $\langle \mathcal{W}, K\rangle$, construct market $\mathcal{M} = (N,M, \vec{v})$ and tuple $(\vec{x}, \vec{p})$, with buyers $N = \{0, 1, \ldots, n\}$, items $M = \{0, 1, \ldots, 2n\}$, and values:
\begin{itemize}
\item Buyer $0$: $v_{0,0} = K - 1$; $v_{0,j} = w_j$, for all $j \in \{1, \ldots, n\}$; $v_{0,j} = 0$, for all $j \in \{n+1, \ldots, 2n\}$.
\item Buyer $i \in \{1, \ldots, n\}$: $v_{i,n+i} = 1$; $v_{i,j} = 0$, for all $j \in M \setminus \{n+i\}$.
\end{itemize}

Let $\vec{x}_0 = \{0\}$ and $\vec{x}_i = \{i, n+i\}$, for all $i \in \{1, \ldots, n\}$. Define prices: $p_0 = 1$, $p_j = \frac{w_j}{K}$ and $p_{n+j} = 1 - p_j$, for all $j \in \{1, \ldots, n\}$ 
(Note that if there exist items with $w_j > K$, those items can be thrown away from the beginning).

($\implies$) If there is a solution $S \subseteq U$ to $\mathcal{W}$, then we claim that $\mathcal{M}$ does not have an equilibrium at $(\vec{x}, \vec{p})$ since  
buyer $0$ can acquire a better bundle, namely $S$:

$\bullet$\hspace{2mm}Buyer $0$ can afford $S$: $$\sum_{j \in S} p_j = \sum_{j \in S} \frac{w_j}{K} = \frac{K}{K} = 1.$$

$\bullet$\hspace{2mm}Bundle $S$ is strictly better than $\vec{x}_0$: $$\sum_{j \in S} v_{0,j} = \sum_{j \in S} w_j = K > K-1 = v_{0,0}.$$

($\impliedby$) If $(\vec{x}, p)$ is not a market equilibrium, then it must be that buyer $0$ can get a better bundle (since all budgets are spent, all items are sold, and the other buyers already have their unique valuable item).

Thus there is bundle $S$ such that: $(i)$ $\sum_{j \in S} v_{0,j} > v_{0,0}$ and $(ii)$ $\sum_{j \in S} p_j \leq 1$. From $v_{0,j} = 0$, for all $j \in \{n+1, \ldots, 2n\}$, it follows that
$S \subset \{1, \ldots, n\}$ (otherwise, just take $S' = S \cap \{1, \ldots, n\}$).
Condition $(i)$ is equivalent to:
$\sum_{j \in S} w_j \geq K$
and condition $(ii)$ can be rewritten as:
$$\sum_{j \in S} \frac{w_j}{K} \leq 1 \iff \sum_{j \in S} w_j \leq K$$
Then $S$ is a subset-sum solution; this completes the proof.
\end{proof}

\begin{theorem} \label{thm:additive_I}
Given a market with indivisible goods and additive valuations, $\mathcal{M} = (N, M, \vec{v})$, it is NP-hard to decide if $\mathcal{M}$ has a competitive equilibrium.
\end{theorem}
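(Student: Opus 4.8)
The plan is to prove NP-hardness by a polynomial-time reduction from \textsc{PARTITION}, the same problem used in Theorem~\ref{thm:Leontief_Ix}. I emphasize at the outset that we aim only for hardness, not membership in NP: deciding whether a given $(\vec{x}, \vec{p})$ is an equilibrium is already coNP-complete for additive valuations, so the existence problem is not expected to lie in NP, which is why the statement claims only NP-hardness.

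Given a \textsc{PARTITION} instance $S = \{s_1, \ldots, s_m\}$ with $\sum_{j} s_j = 2T$, I would build a market $\mathcal{M}$ with exactly two buyers, both having the \emph{identical} additive valuation $v_{1,j} = v_{2,j} = s_j$ over the item set $M = \{1, \ldots, m\}$. The intuition is that two symmetric buyers with equal budgets must, by the envy-freeness of any equilibrium, receive bundles of equal value, and forcing all the money to be extracted will pin this common value to exactly $T$ --- precisely a balanced partition.

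For the forward direction (partition $\Rightarrow$ equilibrium), given a partition $(A,B)$ with $\sum_{A} s_j = \sum_B s_j = T$, I would set $p_j = s_j / T$ and allocate $\vec{x}_1 = A$, $\vec{x}_2 = B$. Every item then has the same bang-per-buck $s_j / p_j = T$, so the utility of any bundle $Z$ equals $T \cdot p(Z)$; since each buyer spends exactly $p(A) = p(B) = 1$, each obtains utility $T$, and no affordable bundle can do strictly better, because exceeding utility $T$ would require $p(Z) > 1$. All items are sold and both budgets are exhausted, so $(\vec{x}, \vec{p})$ is an equilibrium.

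The backward direction (equilibrium $\Rightarrow$ partition) is the main obstacle, since I must rule out ``spurious'' equilibria whose structure does not reflect a partition. The key observation is that no item can have price zero: all values $s_j$ are strictly positive, so with only two buyers at least one buyer does not hold a given item, and that buyer could append it for free, strictly increasing utility within budget --- contradicting optimality. Hence all prices are positive, every item is sold, and the two bundles partition $M$ with $p(\vec{x}_1) = p(\vec{x}_2) = 1$. Because the two buyers have identical valuations and face the same prices and budget, optimality means each bundle solves the same knapsack problem $\max \{ \sum_{j \in Z} s_j : p(Z) \le 1\}$, so both attain the same optimal value $U^{*}$; summing over the two disjoint bundles gives $2U^{*} = \sum_j s_j = 2T$, hence $U^{*} = T$ and $(\vec{x}_1, \vec{x}_2)$ is a balanced partition. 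This establishes the equivalence and therefore the NP-hardness of deciding equilibrium existence.
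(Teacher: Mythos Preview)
Your reduction from \textsc{PARTITION} is correct. The forward direction is immediate once you observe that with prices $p_j = s_j/T$ every bundle has utility exactly $T$ times its price. In the backward direction your three steps are all sound: (i) a zero-priced item would be demanded by the buyer who does not hold it, contradicting optimality; (ii) hence all items are sold and the two bundles partition $M$ with each costing exactly $1$; (iii) since the two buyers face the identical knapsack $\max\{\sum_{j\in Z} s_j : p(Z)\le 1\}$, their optimal values coincide, and summing the two (disjoint) bundle values gives $2U^{*}=2T$, so each side has value $T$.

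The paper takes a different route, reducing from \textsc{EXACT COVER BY 3-SETS} rather than \textsc{PARTITION}. In the paper's construction there are $k$ buyers (one per set in the family), $3n$ ``element'' items each valued $1/3$ by the buyers whose set contains that element, and $k-n$ additional ``universal'' items each valued $1$ by every buyer. The argument is that in any equilibrium every buyer attains utility exactly $1$, which forces the universal items to be priced at $1$ and the remaining $n$ buyers to receive pairwise-disjoint triples covering the universe. Your two-buyer symmetric construction is more economical and the envy-freeness/symmetry argument is particularly clean. The paper's reduction has one advantage worth noting: since X3C is strongly NP-hard, it yields strong NP-hardness (no pseudo-polynomial algorithm unless $\mathrm{P}=\mathrm{NP}$), whereas your reduction from \textsc{PARTITION} gives only weak NP-hardness. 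For the theorem as stated, however, either suffices.
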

\begin{proof}
We reduce from the NP-complete problem EXACT COVER BY $3$-SETS (X3C):
\begin{quote}
\emph{
Given universe $\mathcal{U} = \{1, \ldots, 3n\}$ of elements and family of subsets $\mathcal{F} = \{S_1, \ldots, S_k\}$, with $|S_i| = 3$, $\forall i$,
decide if there is collection $S \subseteq \mathcal{F}$ such that each element of $\mathcal{U}$ occurs exactly once in $S$.}
\end{quote}
Given X3C instance $\langle \mathcal{U}, \mathcal{F}\rangle$, define $N = \{1, \ldots, k\}$, $M = \{1, \ldots, 3n, 3n+1, \ldots, 2n+k\}$
(note this assumes that $k \leq n$, since otherwise
the answer to the X3C instance is trivially ``no''), and valuations for each buyer $i\in N$:
\begin{itemize}
\item $v_{i,j} = \frac{1}{3}$, for all $j \in S_i$.
\item $v_{i,j} = 1$, for all $j \in \{3n+1, \ldots, 2n+k\}$.
\item $v_{i,j} = 0$, otherwise.
\end{itemize}

If the market has some competitive equilibrium $(\vec{x}, \vec{p})$, then the following conditions hold:
\begin{itemize}
\item Each buyer gets a bundle worth at least $1$, since the items in $\{3n+1, \ldots, 2n+k\}$ are each worth $1$ to every buyer and each of their prices is at most $1$ (since all items get sold).
\item No buyer can get a bundle worth more than $1$. % \simina{TODO: fill in.}
\end{itemize}
Then each buyer gets a bundle worth exactly $1$, and so the items in $\{3n+1, \ldots, 2n+k\}$ are priced at $1$ each. The remaining $n$ buyers get a bundle worth $1$ each from the items $\{1, \ldots, 3n\}$, 
which can only happen if their allocations form a solution to the X3C instance. % (\simina{A bit more detail}).

If X3C has a solution $S$, then a market equilibrium is obtained immediately by giving the sets in $S$ to the buyers that want them, and the leftover items, in $\{3n+1, \ldots, 2n+k\}$, to the remaining buyers. % (\simina{A bit more detail}).
\end{proof}

The next question, of computing an equilibrium allocation given a market $\mathcal{M}$ and a price vector $\vec{p}$ was raised by Bouveret and Lema\^{i}tre (\cite{BL14}). In a recent note, Aziz (\cite{Aziz15}) 
also studied the hardness of this problem. We include our proof as well, which uses the 
$\textsc{PARTITION}$ problem.

\begin{theorem} \label{thm:additive_Ip}
Given a market $\mathcal{M} = (N, M, \vec{v})$ with indivisible goods, additive valuations, and price vector $\vec{p}$, it is coNP-hard to decide if there is an allocation $\vec{x}$ such that $(\vec{x}, \vec{p})$ is a market equilibrium.
\end{theorem}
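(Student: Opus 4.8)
The plan is to establish coNP-hardness by giving a polynomial-time reduction from \textsc{Partition} to the \emph{complement} of our problem; that is, I will construct a market $\mathcal{M}$ and prices $\vec{p}$ so that the given \textsc{Partition} instance has a solution if and only if \emph{no} allocation $\vec{x}$ makes $(\vec{x},\vec{p})$ a market equilibrium. Since \textsc{Partition} is NP-complete, this shows the existence question is coNP-hard. Note I do not attempt to show membership in coNP: a candidate allocation cannot in general be verified in polynomial time, by the coNP-completeness of the verification problem established just above, so the problem is genuinely of an $\exists\forall$ shape and the content is entirely in the hardness reduction.

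Given a \textsc{Partition} instance $\{s_1,\dots,s_m\}$ with $T=\sum_j s_j$, I may assume $T$ is even and each $s_i<T/2$; otherwise the instance is decidable by inspection (an $s_i>T/2$ forces a ``no,'' while an odd $T$ or some $s_i=T/2$ is immediate) and can be mapped to a fixed market that does, or does not, admit an equilibrium accordingly. Let $K=T/2$. I reuse the gadget behind the verification theorem, specialized to \textsc{Partition}: buyers $N=\{0,1,\dots,m\}$ and items $M=\{0,1,\dots,2m\}$, where buyer $0$ has $v_{0,0}=K-1$ and $v_{0,j}=s_j$ for $j\in\{1,\dots,m\}$ (and value $0$ elsewhere), while each buyer $i\in\{1,\dots,m\}$ values only its private item $m+i$, at $1$. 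The prices are $p_0=1$, $p_j=2s_j/T$ for $j\in\{1,\dots,m\}$, and $p_{m+i}=1-2s_i/T$ for each $i$. A direct computation gives total price $1+2+(m-2)=m+1$, matching the total budget of the $m+1$ buyers.

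The crux is a \emph{forcing} argument showing that the existence question collapses to a single optimality check. First, each private item $m+i$ is affordable ($p_{m+i}<1$) and valued only by buyer $i$, so in any equilibrium buyer $i$ must receive it, for otherwise buyer $i$ could profitably deviate to it; hence items $m+1,\dots,2m$ are pinned to buyers $1,\dots,m$. Since $p_0=1$, any buyer $i\ge 1$ who also held item $0$ would overspend, so item $0$ must go to buyer $0$, and holding it already exhausts buyer $0$'s budget; thus $\vec{x}_0=\{0\}$ in \emph{every} feasible allocation, and the remaining items $1,\dots,m$ only serve to top up the budgets of buyers $1,\dots,m$. Consequently buyers $1,\dots,m$ are automatically optimal (each holds its unique valued item), and whether \emph{any} feasible allocation is an equilibrium depends solely on whether buyer $0$, holding $\{0\}$ of value $K-1$, is optimal at the prices $\vec{p}$, a condition that depends on $\vec{p}$ alone and not on the allocation.

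Finally I analyze buyer $0$'s deviations. Because item $0$ costs $1$, any strictly better affordable bundle is a subset $S\subseteq\{1,\dots,m\}$ with $\sum_{j\in S}2s_j/T\le 1$ and value $\sum_{j\in S}s_j>K-1$; the first inequality reads $\sum_{j\in S}s_j\le K$ and the second $\sum_{j\in S}s_j\ge K$, so a profitable deviation exists exactly when some $S$ satisfies $\sum_{j\in S}s_j=K=T/2$, i.e. exactly when the \textsc{Partition} instance is solvable. Therefore buyer $0$ is optimal, and an equilibrium exists, if and only if \textsc{Partition} has \emph{no} solution, which completes the reduction. I expect the forcing argument of the previous paragraph to be the main obstacle: the care lies in proving that \emph{no} feasible budget-exhausting, market-clearing allocation can avoid assigning buyer $0$ exactly item $0$, since only then does the existential quantifier over allocations genuinely reduce to the price-only optimality of buyer $0$ and let the SUBSET-SUM/\textsc{Partition} deviation analysis carry the hardness.
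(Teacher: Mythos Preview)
Your reduction is correct and genuinely different from the paper's. Both proofs reduce from \textsc{Partition} and both work by \emph{forcing} a unique feasible allocation and then showing that the sole remaining optimality check for one designated buyer is equivalent to the \textsc{Partition} instance. The difference is in the gadget. The paper uses only two buyers: buyer~$2$ can afford and strictly wants all of $\{1,\dots,m\}$, which pins buyer~$1$ to the two auxiliary items $\{m{+}1,m{+}2\}$; the deviation analysis for buyer~$1$ then requires a three-case argument (swap both auxiliaries, swap the high-value one, swap the low-value one) to isolate the \textsc{Partition} condition. You instead reuse the $(m{+}1)$-buyer construction from the verification theorem, adding private items $m{+}i$ so that each buyer $i\ge 1$ is trivially optimal and item~$0$ is forced onto buyer~$0$; the deviation analysis for buyer~$0$ is then a single clean step. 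Your approach buys modularity and a shorter case analysis at the price of a larger market; the paper's two-buyer construction is more economical in the instance size but needs more work in the deviation argument. One small point worth making explicit in your write-up: after the forcing argument you should exhibit at least one budget-exhausting, market-clearing allocation (e.g.\ $\vec{x}_i=\{i,m{+}i\}$ for $i\ge 1$), so that when \textsc{Partition} has no solution there is actually an equilibrium to point to; you gesture at this with ``top up the budgets'' but never state the witness allocation.
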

\begin{proof}
We use a reduction from the NP-complete problem $\textsc{PARTITION}$.
Given a set $S = \{s_1, \ldots, s_m\}$, where $\sum_{j=1}^{m} s_j = 2V$ and $s_j \in \mathbb{N}$, $\forall j \in \{1, \ldots, m\}$, construct the following market with indivisible goods and additive valuations:
\begin{itemize}
\item Let $N = \{1, 2\}$ and $M = \{1, \ldots, m+2\}$.
\item Buyer $1$'s valuations: $v_{1,j} = s_j$, $\forall j \in \{1, \ldots, m\}$, $v_{1,m+1} = 3V$ and $v_{1,m+2} = V - 1$.
\item Buyer $2$'s valuations: $v_{2, j} = 1$, $\forall j \in \{1, \ldots, m\}$ and $v_{2,m+1} = v_{2,m+2} = 0$.
\end{itemize}
Consider the price vector given by $p_j = \frac{s_j}{2V}$, $\forall j \in \{1, \ldots, m\}$, $p_{m+1} = \frac{1}{2} = p_{m+2}$.

We claim that $S$ has a partition if and only if $\mathcal{M}$ does not have an equilibrium at $\vec{p}$. 
First, note that buyer $2$ can afford to buy all the items it has a strictly positive value for -- i.e. the set $M' = \{1, \ldots, m\}$ -- since:
\[
\vec{p}(M') = \sum_{j=1}^{m} p_j = \sum_{j=1}^{m} \frac{s_j}{2V} = \frac{2V}{2V} = 1
\]
Thus any equilibrium allocation $\vec{x}$ has the property that $M' \subseteq \vec{x}_2$. In addition, buyer $2$ cannot afford any other item, and so it must be the case that $\vec{x}_1 = M \setminus M' = \{m+1, m+2\}$ and $\vec{x}_2 = M'$ in any equilibrium.

($\implies$) If there is a partition $\langle A, B \rangle$ of $S$, then buyer $1$ can afford a better bundle at these prices, namely $Y = \{m+1\} \cup A$, since:
\begin{eqnarray*}
v_{1}(Y) &=& v_{1, m+1} + \sum_{j \in A} v_{1,j} = 3V + \sum_{j \in A} s_j \\
&= & 4V > 4V-1 = v_{1}(\vec{x}_1)
\end{eqnarray*}
and 
\[
\vec{p}(Y) = p_{m+1} + \sum_{j \in A} p_j = \frac{1}{2} + \sum_{j \in A} \frac{s_j}{2V} = \frac{1}{2} + \frac{V}{2V} = 1
\]
Thus the market cannot have a competitive equilibrium at $\vec{p}$.

($\impliedby$) If the market does not have an equilibrium at $\vec{p}$, then it must be the case that in any feasible allocation there exists an improving deviation. Consider the allocation $\vec{x}_1 = \{m+1, m+2\}$ and $\vec{x}_2 = \{1, \ldots, m\}$. Since buyer $1$ is already getting its optimal bundle, it follows that buyer $1$ has an improving deviation. We consider a few cases:
\begin{itemize}
\item If buyer $1$ replaces both items $m+1$ and $m+2$, then the only bundle it can afford by doing this is $\vec{x}_2$ and $v_1(\vec{x}_2) < v_1(\vec{x}_1)$; thus buyer $1$ does not have an improving deviation of this type.
\item If buyer $1$ replaces item $m+1$ with some subset $C$ of $M'$ then again its utility decreases since:
\begin{eqnarray*}
v_1\left(\{m+2\} \cup C\right) & = & V-1 + v_1(C) \\
& =& V - 1 + \sum_{j \in C} s_j \\
&<& V - 1 + 2V  < 4V -1 \\
&=& v_1(\vec{x}_1)
\end{eqnarray*}
\item The only type of deviation left is the one where buyer $1$ replaces item $m+2$ with some subset $C$ of $M'$. Then the only improvements in value can come from bundles
worth at least $V$. That is, there must exist a subset $C \subset M'$ with the property that:
\begin{eqnarray*}
v_1(C) &=& \sum_{j \in C} s_j > v_1\left(\{m+2\}\right) = V-1 \\
&\iff & \sum_{j \in C} s_j \geq V
\end{eqnarray*}
and 
\[
\vec{p}(C) = \sum_{j \in C} \frac{s_j}{2V} \leq \frac{1}{2} \iff \sum_{j \in C} s_j \leq V
\]
It follows that $\sum_{j \in C} s_j = V$ and so $\langle C, M' \setminus C \rangle$ are a partition of $S$.
\end{itemize}
This completes the proof of the theorem.
\end{proof}

Our final proof is the most subtle and included next.

\begin{theorem}
Given a market $\mathcal{M} = (N, M, \vec{v})$ with indivisible goods, additive valuations, and allocation $\vec{x}$, it is coNP-hard to decide if there is a price vector $\vec{p}$ such that $(\vec{x}, \vec{p})$ is a market equilibrium.
\end{theorem}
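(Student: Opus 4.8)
The plan is to prove coNP-hardness by a reduction from the NP-complete problem PARTITION, arranged so that a given instance $S=\{s_1,\dots,s_m\}$ with $\sum_{j} s_j = 2V$ admits a partition \emph{if and only if} the constructed market together with its (fixed) allocation admits \emph{no} supporting price vector. Since this maps YES-instances of PARTITION to NO-instances of ``do supporting prices exist?'', it shows the complementary problem is NP-hard, hence the stated problem is coNP-hard. I would take the number-balancing trick of Theorem~\ref{thm:additive_Ip} as a starting point, but the essential new difficulty is that here the prices are not handed to us: we must engineer the allocation (and a few gadget buyers) so that the price vector is squeezed into exactly the narrow window that re-creates the ``$=V$ subset'' dichotomy.

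First I would build the adversary. Introduce a buyer $1$ holding a single private item $w$ with $v_{1,w}=4V-1$, and set $v_{1,m+1}=3V$ and $v_{1,g_j}=s_j$ for items $g_1,\dots,g_m$ (value $0$ elsewhere). Budget exhaustion forces $p_w=1$, and a short case analysis shows buyer $1$'s only value-improving bundles are of the form $\{m+1\}\cup C$ for a $g$-subset $C$ with $\sum_{j\in C}s_j\ge V$ (dropping $w$ for pure $g$-subsets is never improving since $\sum_j s_j=2V<4V-1$); such a deviation is affordable exactly when $p_{m+1}+\sum_{j\in C}p_{g_j}\le 1$. To pin down the $g$-prices, add a container buyer $c_0$ holding \emph{all} of $g_1,\dots,g_m$ with $v_{c_0,g_j}=1$: it is always content (it already owns everything it values) and its budget constraint forces $\sum_j p_{g_j}=1$ while leaving the individual $p_{g_j}\ge 0$ free.

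The crux, and the step I expect to be the main obstacle, is forcing $p_{m+1}<\tfrac12$; without it the price-setter simply makes $p_{m+1}$ tiny, rendering every value-$\ge V$ subset unaffordable and giving a spurious equilibrium. I would enforce this with a two-buyer gadget. Let buyer $F$ hold $\{m+1,c\}$ with $v_{F,m+1}=v_{F,c}=1$, so $F$ is trivially content and contributes only the budget tie $p_{m+1}+p_c=1$. Let buyer $G$ hold $\{a,b\}$ with $v_{G,a}=v_{G,b}=1$ and, crucially, value the externally held item $c$ at $1+\delta$ for a small $\delta>0$. Then $G$ has the improving deviation $\{c,a\}$ or $\{c,b\}$ unless both are unaffordable, i.e.\ unless $p_c>\max(p_a,p_b)$; since $p_a+p_b=1$ this means $G$ is content only if $p_c>\tfrac12$, whence $p_{m+1}=1-p_c<\tfrac12$ in every candidate equilibrium. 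This lower bound on $p_c$, transmitted through $F$'s budget into an upper bound on $p_{m+1}$, is exactly the pinning that a naive port of Theorem~\ref{thm:additive_Ip} lacks.

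Finally I would verify the two directions. If $S$ has \emph{no} partition, exhibit prices $p_{g_j}=s_j/(2V)$ (all positive, summing to $1$), $p_{m+1}=\tfrac12-\tfrac1{4V}$, $p_c=\tfrac12+\tfrac1{4V}$, $p_a=p_b=\tfrac12$, $p_w=1$: all gadget buyers and the container are content, and buyer $1$ is content because an affordable $g$-subset satisfies $\sum_{j\in C}p_{g_j}\le 1-p_{m+1}=\tfrac12+\tfrac1{4V}$, i.e.\ $\sum_{j\in C}s_j\le V+\tfrac12$, so any improving subset ($\sum s_j\ge V$) would have to sum to exactly $V$, which does not exist. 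Conversely, if a partition $C^\ast$ exists then its complement $\overline{C^\ast}$ within $\{g_j\}$ also sums to $V$; for \emph{any} price vector meeting the forced constraints we have $p_{m+1}<\tfrac12$ and $\sum_{C^\ast}p_{g_j}+\sum_{\overline{C^\ast}}p_{g_j}=1$, so the cheaper of the two costs at most $\tfrac12<1-p_{m+1}$ while having value $V$, yielding an affordable improving deviation $\{m+1\}\cup(\text{cheaper subset})$ for buyer $1$. Hence no price vector supports the allocation, completing the reduction.
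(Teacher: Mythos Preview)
Your reduction is correct, and the gadget forcing $p_{m+1}<\tfrac12$ does exactly what you need: the budget ties $p_w=1$, $\sum_j p_{g_j}=1$, $p_{m+1}+p_c=1$, $p_a+p_b=1$ are all necessary for equilibrium, and buyer $G$'s optimality then forces $p_c>\max(p_a,p_b)\ge\tfrac12$, hence $p_{m+1}<\tfrac12$. The two directions go through as you describe (and the edge cases where some $p_{g_j}=0$ or $p_{m+1}=0$ are harmless, since they already give buyer~$1$ the cheap deviation $\{w,g_j\}$ or $\{w,m{+}1\}$).

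However, your route is genuinely different from the paper's, and considerably heavier. The paper reduces from \textsc{Subset-Sum} using only \emph{two} buyers and no auxiliary gadgets: buyer~$1$ holds $\{m{+}1,m{+}2\}$ with $v_{1,m+1}=K-1$ and $v_{1,m+2}$ astronomically large, buyer~$2$ holds $\{1,\dots,m\}$ with $v_{2,m+1}=K+1$, and both value item $j\le m$ at $w_j$. The key idea you did not hit on is a \emph{two-sided swap}: if a subset $S$ sums to $K$, then for \emph{every} price vector either $\vec p(S)\le p_{m+1}$ (so buyer~$1$ profitably swaps item $m{+}1$ for $S$, keeping $m{+}2$) or $\vec p(S)>p_{m+1}$ (so buyer~$2$ profitably swaps $S$ for item $m{+}1$). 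This dichotomy makes any attempt to ``hide'' the solution by choosing prices self-defeating, with no need to pin $p_{m+1}$ externally. What your approach buys is a single adversary whose deviation certifies failure, at the cost of three extra gadget buyers to squeeze the price window; what the paper's approach buys is a two-buyer instance where the price freedom is neutralised by symmetry rather than by constraints.
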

\begin{proof}
We use the NP-complete problem $\textsc{SUBSET-SUM}$. Given set of positive integers $\mathcal{W} = \{w_1, \ldots, w_m\}$ and integer $K$,
%input $\mathcal{W} = \langle U, \vec{v}, \vec{w}, V, W \rangle$, where $U = \{1, \ldots, m\}$ is a set of items of values $\vec{v}$ and weights $\vec{w}$, $V$ a value goal, and $W$ a maximum weight constraint,
we construct a market $\mathcal{M} = (N, M, \vec{v})$ and an allocation $\vec{x}$, such that an equilibrium price vector exists at $\vec{x}$ if and only if the subset-sum problem does not have a solution. 

Let $N = \{1, 2\}$, $M = \{1, \ldots, m+2\}$, allocation $\vec{x}$ given by $\vec{x}_1 = \{m+1, m+2\}$, $\vec{x}_2 = \{1, \ldots, m\}$, and values:
\begin{itemize}
\item Buyer 1: $v_{1,m+1} = K-1$; $v_{1,m+2} = 4 \left(\sum_{j=1}^{m} w_j\right)^2$; $v_{1,j} = w_j$, for all $j \in \{1, \ldots, m\}$.
\item Buyer 2: $v_{2,m+1} = K + 1$; $v_{2,m+2} = 0$; $v_{2,j} = w_j$, for all $j \in \{1, \ldots, m\}$.
\end{itemize}

Note we can assume the sum of the numbers in $\mathcal{W}$ is at least $K$ and none is greater than $K$.

$(\implies)$ If there is a solution $S$ to $\langle \mathcal{W}, K\rangle$, then we claim there can be no market equilibrium. Let $\vec{p}$ be any feasible price vector. Then
the utility of buyer $1$ for bundle $S$ is: $u_1(S) = \sum_{j \in S} v_{1,j} = \sum_{j \in S} w_j = K > K-1 = v_{1,m+1}$.

By the equilibrium property, it must be the case that buyer $1$ cannot afford to swap pay for item $m+1$ instead of the set $S$, and so: $\vec{p}(S) > \vec{p}_{m+1}$. 
However, this implies buyer $2$ can afford to swap the set $S$ with item $m+1$, and moreover, this is an improving deviation since: 
%\[
$v_{2,m+1} = K + 1 > K = \sum_{j \in S} w_j$.
%\]
Thus there can be no equilibrium prices.

$(\impliedby)$ If there is no market equilibrium, then we claim there is a subset-sum solution.
To this end, we show that whenever there is no set $S\subseteq U$ such that $\sum_{j \in S} w_j = K$, then a market equilibrium exists. For example, define the next price vector (at which all the budgets are spent):
\begin{itemize}
\item $p_j =  \frac{w_j}{\sum_{k=1}^{m} w_k}$, for all $j \in \{1, \ldots, m\}$
\item $p_{m+1} = \frac{K - 1 + \epsilon}{\sum_{k=1}^{m} w_k}$, where $\epsilon = \frac{1}{4(m+1)^2}$
\item $p_{m+2} = 1 - p_{m+1}$
\end{itemize}
First we claim that buyer $1$ does not have a deviation. Note that item $m+2$ is very valuable, i.e. buyer $2$ would never exchange it for any subset of $\{1, \ldots, m\}$. Thus the only remaining type of deviation is one in which buyer $1$ exchanges item $m+1$ for a subset $S \subseteq \{1, \ldots, m\}$. Then it hold that $u_1(S) > v_{1,m+1} = K-1$, that is, $u_1(S) \geq K$. We have:
\begin{eqnarray*} \label{eq:nodev1}
& & \vec{p}(S) - p_{m+1} = \left( \sum_{j \in S} p_j \right)  - p_{m+1} \\
&=& \sum_{j \in S} \left( \frac{w_j}{\sum_{k=1}^{m} w_k} \right) - \frac{K-1+\epsilon}{\sum_{k=1}^{m} w_k} > 0\\
& \iff &\sum_{j \in S} w_j > K - 1 + \epsilon
\end{eqnarray*}

The last inequality holds since $u_1(S) \geq K \iff \sum_{j \in S} w_j \geq K > K - 1 + \epsilon$. Thus bundle $S$ is too expensive for buyer $1$ to afford it with the price of item $m+1$.

Second, the only type of improving deviation of buyer $2$ is one in which a set $S \subseteq \{1, \ldots, m\}$ is exchanged for item $m+1$. For this to be an improvement, it must hold that:
\begin{eqnarray*}
v_{2,m+1} > u_2(S) \iff u_2(S) \leq K \iff \sum_{j \in S} w_j \leq K 
\end{eqnarray*}
Since there is no subset-sum solution, we have: $\sum_{j \in S} w_j < K$, and so: $\sum_{j \in S} w_j \leq K - 1$.
Equivalently:
\begin{displaymath}
\vec{p}(S) = \sum_{j \in S} \left( \frac{w_j}{\sum_{k=1}^{m} w_k} \right) \leq \frac{K - 1}{\sum_{k=1}^{m} w_k}
< \frac{K - 1 + \epsilon}{\sum_{k=1}^{m}w_k} = p_{m+1}
\end{displaymath}

It follows that $p_{m+1} > \vec{p}(S)$, and so buyer $2$ cannot afford to exchange the set $S$ for item $m+1$.
Thus neither buyer $1$ nor buyer $2$ have a deviation, and so $(\vec{x}, \vec{p})$ is an equilibrium.
Then there is a subset-sum solution if and only if the market does not have an equilibrium, which completes the proof.
\end{proof}

\section{Acknowledgements}
Simina Br\^anzei and Peter Bro Miltersen 
acknowledge support 
from the Danish National Research Foundation
and The National Science Foundation of China (under the grant 61361136003) for
the Sino-Danish Center for the Theory of Interactive Computation and from the Center for
Research in Foundations of Electronic Markets (CFEM), supported by the Danish
Strategic Research Council.

\bibliographystyle{plain}

%\bibliography{abb,ultimate}

\section{Appendix}

In this section we include the algorithm that computes a CEEI with a $1/n$ approximation for the optimal social welfare (among all equilibria).

%\vspace{-4mm}
\begin{algorithm}[h!] 
	\begin{algorithmic}
		\caption{\textsc{Compute-Equilibrium-1/n-SW($\mathcal{M}$)}}
		\label{alg2.1}
		\State \textbf{input}: Market $\mathcal{M}$ with Leontief valuations
		\State \textbf{output}: Equilibrium allocation and prices $(\vec{x},\vec{p})$, or \emph{\textsc{Null}} if none exist.
		%\emph{Guarantee: The social welfare at $(\vec{x},\vec{p})$ is at least $1/n$ of the optimal equilibrium.}\\
		
		\If{($(m < n)$ or $(\exists \; i,j \in N$ with $D_i = D_j$ and  $|D_i| = 1)$}
		\State \textbf{return} \emph{\textsc{Null}} \emph{// Too few items or two buyers have identical singleton demands}
		\EndIf
		%\If{($\exists \; i \in N$ with $|D_i| = 1$)}
		%\State \textbf{return} \emph{Compute-Equilibrium($\mathcal{M}$)} \emph{ // If there is a buyer with singleton demand, the old algorithm will do}
		%\EndIf
		\State $\mathcal{P} \leftarrow \emptyset$ \emph{ // Buyers eligible for getting their full demand}
		\For{($i \in N$)} 
		\If{($\not \exists \; k \in N \setminus \{i\}$ such that ($D_k \subseteq D_i$ \textbf{and} $m - |D_i| \geq n-1$))}
		\State $\mathcal{P} \leftarrow \mathcal{P} \cup \{i\}$
		\EndIf
		\EndFor
		\If{($\mathcal{P} = \emptyset$)}
		\State \textbf{return} \emph{Compute-Equilibrium($\mathcal{M}$)} \emph{// No good equilibrium: use basic algorithm}
		\EndIf
		\State $k \leftarrow \argmax_{i \in \mathcal{P}} v_i$ \emph{// The buyer in $\mathcal{P}$ with highest valuation for its demand gets it}
		\State $\vec{x}_{k} \leftarrow D_{k}$
		\State $\mathcal{M}' \leftarrow \left(N \setminus \{k\}, M, (D_i)_{i \neq k}\right)$ \emph{// Reduced market, with buyers $N \setminus \{k\}$}
		\State $(\vec{y}, \vec{q}) \leftarrow$ \emph{Compute-Equilibrium}($\mathcal{M}', D_k$) \emph{// Call basic algorithm on market $\mathcal{M}'$ with pre-allocated items $D_k$} 
		\State $\vec{x} \leftarrow (\vec{x}_k, \vec{y})$ \emph{// Final allocation}
		\For{($j \in D_k$)} 
		\State $p_j \leftarrow 1 / |D_k|$ \emph{// Price buyer $k$'s items uniformly}
		\EndFor
		\For{($j \in M \setminus D_k$)}
		\State $p_j \leftarrow q_j$ \emph{// For the other buyers, use the prices from the reduced market}
		\EndFor
		\State \textbf{return} $(\vec{x}, \vec{p})$
	\end{algorithmic}
\end{algorithm}
%\vspace{-4mm}

\emph{\textbf{Theorem} \ref{thm:additive_maxSW_apx} (restated)
	There is a polynomial-time algorithm that computes a competitive equilibrium from equal incomes with a social welfare of at least $1/n$ of the optimum welfare attainable in any equilibrium.}
\begin{proof}
	We claim that Algorithm~\ref{alg2.1} computes an equilibrium with social welfare at least $1/n$ of the optimal, using the \emph{Compute-Equilibrium} procedure (Algorithm \ref{alg2}) as a subroutine. Note that this approximation holds for weighted valuations 
	(i.e. the value of a buyer for the whole bundle can be arbitrary). 
	Given a market $\mathcal{M}$, Algorithm~\ref{alg2.1} computes a set $\mathcal{P}$ of \emph{eligible} buyers, i.e. buyers $k$ for which the following conditions hold:
	\begin{description}
		\item[$(1)$] No buyer $i$'s demand is completely contained in the demand of buyer $k$.
		\item[$(2)$] $m - |D_k| \geq n - 1$
	\end{description}
	
	Both conditions are necessary in any equilibrium that gives buyer $k$ its demand set. If Condition $1$ is violated, there exists buyer $i \neq k$ with $D_i \subseteq D_k$; $i$ can afford its demand but does not get it, which cannot happen in an equilibrium.
	If Condition $2$ is violated, by allocating $k$ all of its demand, too many items are used up and it's no longer possible to extract all the money.
	
	Algorithm~\ref{alg2.1} gives the eligible buyer $k \in \mathcal{P}$ with a maximal valuation for its demand all the required items. The remaining buyers and items are allocated using Algorithm~\ref{alg2}. 
	We claim that the tuple $(\vec{x}, \vec{p})$ computed is an equilibrium:\\
	
	\hspace{-6mm}$(a) \; \textsc{Budgets exhausted:}$ By combining: $(i)$ $\vec{x}_k \neq \emptyset$, $(ii)$ $m - |D_k| \geq n - 1$, and $(iii)$ the fact that \emph{Compute-Equilibrium} finds an equilibrium in the reduced market, all buyers get a non-empty bundle at a price of $1$. \\ 
	
	\hspace{-6mm}$(b)\; \textsc{Items sold:}$ Clearly all the items are allocated. \\
	
	\hspace{-6mm}$(c)\; \textsc{Optimality for each buyer:}$ If the algorithm exits before the set $\mathcal{P}$ is constructed, then no equilibrium exists. Otherwise, buyer $k$ gets its demand.
	
	Let $i \neq k$ be a buyer that does not get its demand; then $|D_i| \geq 2$. If $i$ is not the last buyer allocated, then $|\vec{x}_i| = 1$ and there are two subcases: \\
	\begin{itemize}
		\item $\vec{x}_i \subset D_i$: By construction, $\vec{p}(\vec{x}_i) = 1$, so $\vec{p}(D_i) = \vec{p}(\vec{x}_i) + \vec{p}(D_i \setminus \vec{x}_i) > 1$ since all the items are priced strictly positively.
		\item $\vec{x}_i \cap D_i = \emptyset$: Then by the time buyer $i$ was allocated, the items in $D_i$ have been exhausted. Since buyer $k$ gets its full demand before everyone else, it cannot be that $D_i \subseteq D_k$ (this holds by choice of the set $\mathcal{P}$, of eligible buyers). Thus there is at least one item from $D_i$ given to a buyer other than $k$ at a price of $1$, which combined with: $|D_i| \geq 2$ gives: $\vec{p}(D_i) > 1$.
	\end{itemize}
	Otherwise, $i$ is the last buyer allocated; note that $x_i$ is always computed in the procedure \emph{Compute-Equilibrium}. Again there are two cases:
	\begin{itemize}
		\item $\vec{x}_i \cap D_i = \emptyset$: Then $D_i$ was exhausted before computing $\vec{x}_i$. Since it cannot be that $D_i \subseteq D_k$, there exists item $j \in D_i \setminus D_k$ given to another buyer at a price of $1$. Using again the fact that $|D_i| \geq 2$ and that all prices are positive, we get: $\vec{p}(D_i) > 1$, thus $i$ cannot afford its demand.
		\item $\vec{x}_i \cap D_i \neq \emptyset$: Let $\ell = |\vec{x}_i \cap D_i|$. Then $i$ pays $1 - \epsilon$ for $\ell$ items from $D_i$, and $\epsilon$ for some other items in $\vec{x}_i \setminus D_i$. We show that $i$ cannot use $\epsilon$ money to purchase the items missing from its allocation. 
		Let $j \in D_i \setminus \vec{x}_i$ be such an item and $S' = \left( \vec{x}_i \cap D_i \right) \cup \{j\}$. Then item $j$ was allocated to a previous buyer, priced at least $1 / |D_k|$. Since $\epsilon < 1 / |D_k|$, $i$ would have to pay at least $1 - \epsilon + \frac{1}{|D_k|} > 1$ for $S'$. From $S' \subseteq D_i$, we get: $\vec{p}(D_i) \geq \vec{p}(S') >1$, and so $i$ cannot afford $D_i$.
	\end{itemize}
	
	Thus the market equilibrium conditions are met. Clearly the best equilibrium cannot have a social welfare higher than $n \cdot v_k$, which gives a $1/n$-approximation.
\end{proof}

\end{document}